\documentclass[oneside,english,12pt]{iopart}
\usepackage[T1]{fontenc}
\usepackage[latin9]{inputenc}
\usepackage{float}
\usepackage{units}
\usepackage{mathrsfs}
\usepackage{url}
\usepackage{amsbsy}
\usepackage{amstext}
\usepackage{amssymb}
\usepackage{graphicx}
\usepackage{esint}

\makeatletter

\floatstyle{ruled}
\newfloat{algorithm}{tbp}{loa}
\providecommand{\algorithmname}{Algorithm}
\floatname{algorithm}{\protect\algorithmname}

\pdfminorversion=4
\usepackage{algorithmic} 
\usepackage{algorithm} 
\usepackage{hyphenat}
\usepackage{animate}
\usepackage{amsthm}

\newtheorem{theorem}{Theorem}
\newcommand{\eqref}[1]{(\ref{#1})}

\pdfminorversion=4

\makeatother

\usepackage{babel}

\begin{document}
\title{The Discrete Fourier Transform for Golden Angle Linogram Sampling}
\author{Elias S. Helou$^1$, Marcelo V. W. Zibetti$^2$, Leon Axel$^2$,\\ Kai Tobias Block$^2$, Ravinder R. Regatte$^2$ and\\ Gabor T. Herman$^3$}
\address{$^1$Institute of Mathematical Sciences and Computation, University of São Paulo, São Carlos, SP, Brazil}
\address{$^2$Center for Biomedical Imaging, Department of Radiology, New York University School of Medicine, New York, NY, USA}
\address{$^3$Ph.D. Program in Computer Science, City University of New York, New York, NY, USA}
\eads{\mailto{elias@icmc.usp.br}, \mailto{Marcelo.WustZibetti@nyulangone.org}, \mailto{Leon.Axel@nyulangone.org}, \mailto{KaiTobias.Block@nyulangone.org}, \mailto{Ravinder.Regatte@nyulangone.org} and \mailto{gabortherman@yahoo.com}}
\begin{abstract}
Estimation of the Discrete-Time Fourier Transform (DTFT) at points
of a finite domain arises in many imaging applications. A new approach
to this task, the Golden Angle Linogram Fourier Domain (GALFD), is
presented, together with a computationally fast and accurate tool,
named Golden Angle Linogram Evaluation (GALE), for approximating the
DTFT at points of a GALFD. A GALFD resembles a Linogram Fourier Domain
(LFD), which is efficient and accurate. A limitation of linograms
is that embedding an LFD into a larger one requires many extra points,
at least doubling the domain's cardinality. The GALFD, on the other
hand, allows for incremental inclusion of relatively few data points.
Approximation error bounds and floating point operations counts are
presented to show that GALE computes accurately and efficiently the
DTFT at the points of a GALFD. The ability to extend the data collection
in small increments is beneficial in applications such as Magnetic
Resonance Imaging. Experiments for simulated and for real-world data
are presented to substantiate the theoretical claims. The mathematical
analysis, algorithms, and software developed in the paper are equally
suitable to other angular distributions of rays and therefore we bring
the benefits of linograms to arbitrary radial patterns.
\end{abstract}

\noindent{\it Keywords\/}: discrete Fourier transform, golden angle, linogram, tomography, magnetic resonance imaging, non-equidistant sampling, error estimates
\ams{65T50, 94A08, 97N40}
\submitto{\IP}
\maketitle

\section{Introduction}

We use $\mathbb{R}$, $\mathbb{C}$ and $\mathbb{Z}$ to denote the
sets of all real numbers, complex numbers and integers, respectively.
Let $\boldsymbol{x}\in\mathbb{C}^{mn}$ denote a fixed two-dimensional
complex array with $m$ rows and $n$ columns and let $\imath:=\sqrt{-1}$.
We define the Discrete-Time Fourier Transform (DTFT) $\mathcal{D}[\boldsymbol{x}]:\mathbb{R}^{2}\to\mathbb{C}$
by
\begin{equation}
\mathcal{D}[\boldsymbol{x}](\xi,\upsilon):=\sum_{i=0}^{m-1}\sum_{j=0}^{n-1}x_{i,j}e^{-\imath(j\xi+i\upsilon)},\label{eq:DSFT_def}
\end{equation}
for $(\xi,\upsilon)\in\mathbb{R}^{2}$. Given a domain $\mathfrak{D}\subset\mathbb{R}^{2}$
with $|\mathfrak{D}|=MN$, where $M$ and $N$ are positive integers,
the Discrete Fourier Transform (DFT) over the domain $\mathfrak{D}$
is the linear operator $\mathrm{D}_{\mathfrak{D}}:\mathbb{C}^{mn}\to\mathbb{C}^{MN}$
that takes any $\boldsymbol{x}\in\mathbb{C}^{mn}$ as input and outputs
the values of the DTFT $\mathcal{D}[\boldsymbol{x}](\xi,\upsilon)$
for all $(\xi,\upsilon)\in\mathfrak{D}$. The definition of the DFT's
domain $\mathfrak{D}$ plays an important role, influencing both the
ability of the DFT to model practical problems and the possibility
of fast computation of $\mathrm{D}_{\mathfrak{D}}\boldsymbol{x}$.
Because the DTFT is $2\pi$-periodic, we can assume that $\mathfrak{D}\subset[-\pi,\pi]^{2}$.

In this paper we introduce a new family $\mathscr{G}\subset2^{\mathbb{R}^{2}}$
(for a given set $\mathbb{S}$, $2^{\mathbb{S}}$ denotes the power
set of $\mathbb{S}$, that is, the set of all subsets of $\mathbb{S}$)
of domains that is such that a ``fine-grained inclusion chain''
of the domains of $\mathscr{G}$ exists. By this we mean that, for
any domain $\mathfrak{D}_{1}\in\mathscr{G}$, there exists another
domain $\mathfrak{D}_{2}\in\mathscr{G}$ such that $\mathfrak{D}_{1}\subset\mathfrak{D}_{2}$
and that $\mathfrak{D}_{2}-\mathfrak{D}_{1}$ has few points compared
to the cardinality of $\mathfrak{D}_{1}$. This latter property is
one of the things setting this work apart from other existing techniques;
in particular, it creates practical interest in certain Magnetic Resonance
Imaging (MRI) applications.\footnote{We provide a few words to clarify the relationship between \eqref{eq:DSFT_def}
and its use in solving some inverse problems, such as MRI. In the
common parlance of Inverse Problems, the $\boldsymbol{x}$ in \eqref{eq:DSFT_def}
comprises the model parameters that we wish to recover from the observed
data, indicated by the left-hand-side of \eqref{eq:DSFT_def}. In
our discussion the forward operator is not fixed a priori; we may
make it any $\mathrm{D}_{\mathfrak{D}}$, with $\mathfrak{D}\subset\mathbb{R}^{2}$
and $|\mathfrak{D}|=MN$. Our concern is the efficacious choice of
$\mathscr{G}$ that contains the $\mathfrak{D}$s to be used in a
particular application.}

In addition to the family $\mathscr{G}$ being capable of modeling
practical data acquisition modes, it is also desirable from the numerical-analysis
point of view: For every domain $\mathfrak{D}\in\mathscr{G}$, there
is an efficient computational approach that delivers an accurate approximate
evaluation of $\mathrm{D}_{\mathfrak{D}}\boldsymbol{x}$ for arbitrary
data $\boldsymbol{x}\in\mathbb{C}^{mn}$. A contribution of the present
paper is a numerical scheme for the approximate computation of $\mathrm{D}_{\mathfrak{D}}\boldsymbol{x}$
for any domain $\mathfrak{D}\in\mathscr{G}$ and also (this is important
for reconstruction techniques) for the approximate computation of
the adjoint operator.\footnote{We assume that, for any of our operators O, the meaning of its adjoint
O{*} is understood.} We include a bound for the approximation error, as well as a floating
point operations (flops) count in order to give theoretical support
to our accuracy and efficiency claims. These claims are also confirmed
by simulation experiments and in MRI reconstructions from real-world
data.

As will be seen, the key features of the method that we introduce
are the following:\footnote{We developed an efficient open source C++ implementation featuring
the listed characteristics. MATLAB and Python bindings are also available.
The source code is at \url{https://bitbucket.org/eshneto/gale/downloads/}.}
\begin{itemize}
\item It performs fast computation of the DFT for the proposed sampling.
\item Increased accuracy can be achieved without considerably increasing
computation time.
\item The method is very accurate in the low-frequency part of the spectrum,
which is the most important part in many imaging applications.
\item Adjoint computation is as fast as the forward DFT.
\item An upper bound for the approximation error is available.
\item It approximates by sum truncation that converges quickly with more
terms.
\item The computational cost of the method scales linearly with the length
of the truncated sum; other known methods scale with the square of
this length\footnote{We discuss some of these methods in Section~\ref{sec:Numerical-Experimentation}.
We do not introduce these techniques earlier in order to avoid disrupting
the presentation.}.
\item Parallelization of the method is efficient and its memory access pattern
is cache-friendly.
\end{itemize}
\par{}Other families of domains, usually sampled in a polar pattern,
that allowed for inclusion of a relatively small amount of data have
been known~\cite{wsk06} and used~\cite{bth18,cld17,fgb14} in MRI
for a while. However, linograms have never been brought into this
context and the idea of joining golden angle (or more general) angular
distributions with linogram-based~\cite{edh87,ehr88,ahr90} sampling
is new to the present work. Not only it results in computationally
effective methods, as we extensively discuss in this paper, but the
wider coverage of the Fourier space of linograms compared to polar
domains has recently been shown to bring improvements in image quality
as well~\cite{erm16,yll17}. Our contribution here is, therefore,
twofold. We introduce the new family of golden-angle linogram domains
and provide an effective and accurate method for the computation of
the DTFT at the points of these domains\footnote{The method we propose is capable of handling arbitrary angular linogram
sampling, we focus on golden angle for concreteness.}. In what follows next, we give, as examples, some classical families
of domains. We delay till \ref{sub:Motivate} discussions of how various
domains have been used in practical MRI and why we believe that the
approaches demonstrated in the present work will lead to improved
performance of MRI in certain kinds of medical applications.

\subsection{The Cartesian and Polar Fourier Domains}

The Cartesian Fourier Domain (CFD) $\mathfrak{C}_{M,N}$, for $M$
and $N$ positive integers, is defined by
\begin{eqnarray*}
\fl\mathfrak{C}_{M,N}:=\Biggl\{\left(\frac{2\pi I}{M},\frac{2\pi J}{N}\right):I\in\{-\nicefrac{M}{2},-\nicefrac{M}{2} & {}+1,\dots,\nicefrac{M}{2}-1\},\\
 & \text{and}\quad J\in\left\{ -\nicefrac{N}{2},-\nicefrac{N}{2}+1,\dots,\nicefrac{N}{2}-1\right\} \Biggr\},
\end{eqnarray*}
see Figure \ref{fig:domains}, left. If $M\ge m$ and $N\ge n$, there
are Fast Fourier Transform (FFT)~\cite{cot65} algorithms that can
compute $\mathcal{D}[\boldsymbol{x}](\xi,\upsilon)$ for all points
$(\xi,\upsilon)\in\mathfrak{C}_{M,N}$ very efficiently, especially
when $M$ and $N$ are powers of $2$. Computing $\mathrm{D}_{\mathfrak{C}_{M,N}}\boldsymbol{x}$
directly from the definition requires $O(mnMN)$ flops, but FFTs perform
a mathematically equivalent computation that uses only $O(MN\log_{2}(MN))$
flops. 

\begin{figure}
\centering{}\includegraphics[width=0.5\columnwidth]{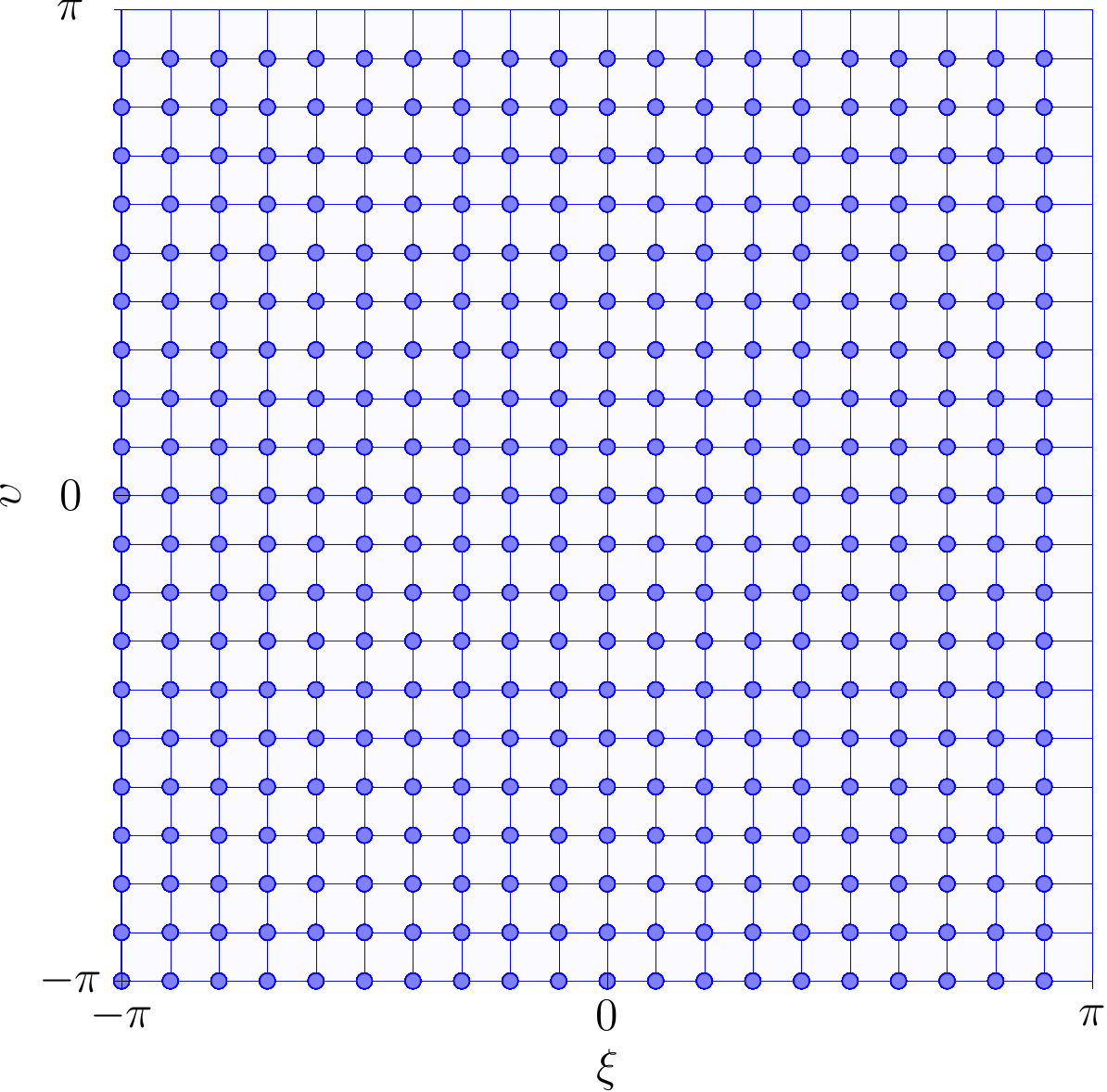}\includegraphics[width=0.5\columnwidth]{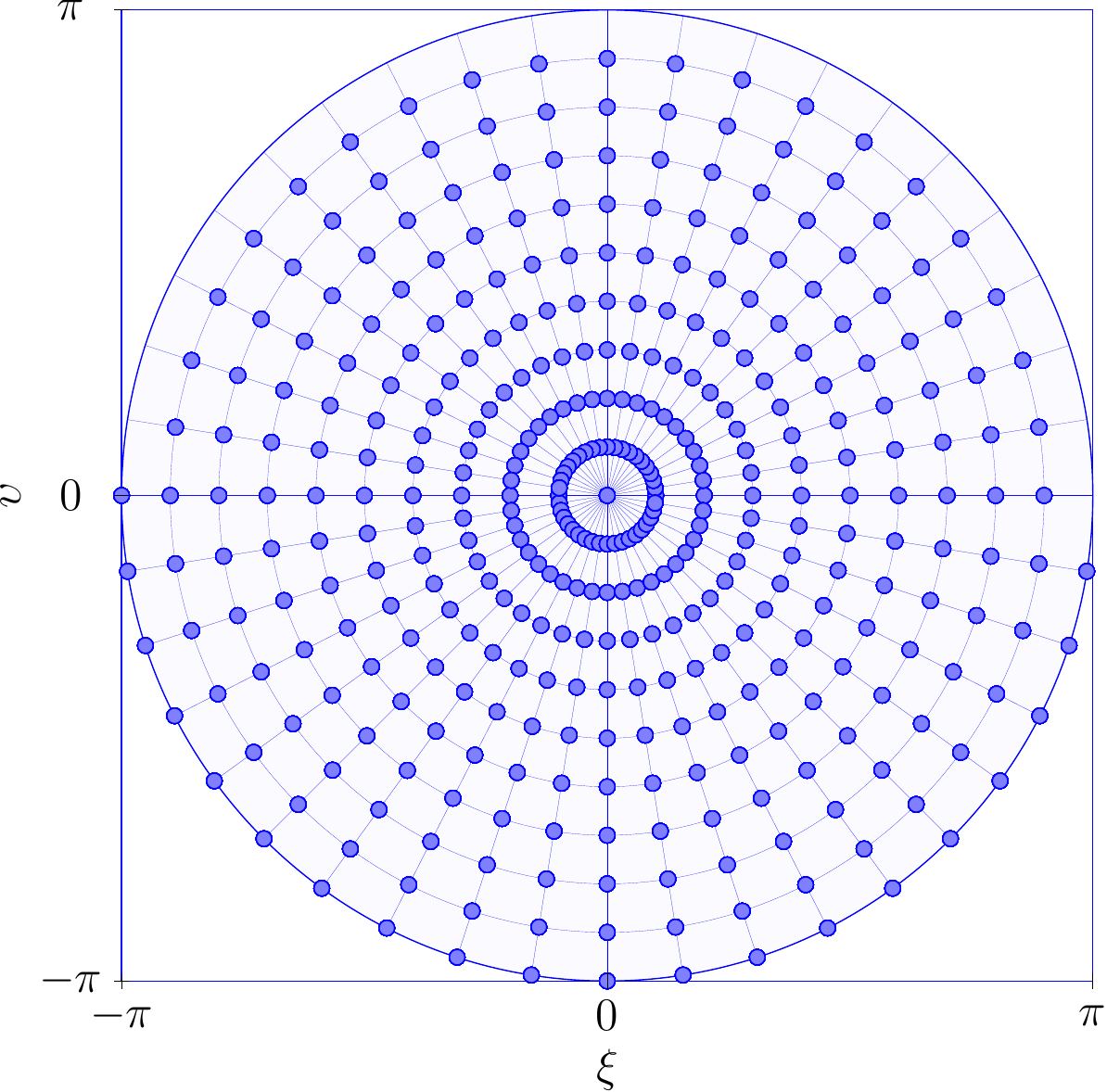}\caption{\label{fig:domains}The above images depict all the points contained
in two of the discrete domains described in the Introduction. Left:
Cartesian Fourier Domain $\mathfrak{C}_{20,20}$. Right: Polar Fourier
Domain $\mathfrak{P}_{20,20}$.}
\end{figure}

Note that, for $\mu$ and $\nu$ positive integers, $\mathfrak{C}_{M,N}\subset\mathfrak{C}_{\mu M,\nu N}$.
Thus, it is possible to add new points to a CFD, but this demands
at least doubling the DFT domain's cardinality, for otherwise the
samples of the DTFT computed at the points of $\mathfrak{C}_{M,N}$
will not be included in a new domain $\mathfrak{C}_{M',N'}$ unless
both $\nicefrac{M'}{M}$ and $\nicefrac{N'}{N}$ are positive integers.
This prevents incremental inclusion of a small amount of new sample
points since, for using a CFD of slightly larger cardinality, the
sample points of the smaller CFD will not be part of the larger domain.

Moreover, in some applications a radial sampling scheme of the Fourier
space may be desirable, such as in MRI~\cite{buf07,ysm05}, or unavoidable,
such as in Computerized Tomography (CT)~\cite{her09,nat86}, in which
case the efficient computation of the DTFT for such a radial set of
points is useful. Therefore, for these applications we may need to
consider the evaluation of the DFT over a Polar Fourier Domain (PFD)
$\mathfrak{P}_{M,N}$, defined by (see Figure \ref{fig:domains},
right)

\begin{eqnarray*}
\fl\mathfrak{P}_{M,N}:=\Biggl\{\Biggl(\frac{2\pi I}{M}\cos\frac{\pi J}{N},\frac{2\pi I}{M}\sin\frac{\pi J}{N}\Biggr):J & {}\in\{0,1,\dots,N-1\}.\\
 & \text{and}\quad I\in\left\{ -\nicefrac{M}{2},-\nicefrac{M}{2}+1,\dots,\nicefrac{M}{2}-1\right\} \Biggr\}.
\end{eqnarray*}
Reasonably fast and accurate techniques are available that can evaluate
$\mathcal{D}[\boldsymbol{x}]$ over a PFD \cite{fes03,kkp09}. For
a PFD, as was the case with the CFDs, no obvious way is available
for including a small number of extra points in the domain. Instead,
once more an inclusion of the form $\mathfrak{P}_{M,N}\subset\mathfrak{P}_{\mu M,\nu N}$
holds for positive integers $\mu$ and $\nu$, which is again of little
practical value in many applications.

In the remainder of this introduction, we present the details of the
newly proposed domains (the GALFDs), which have the flexibility of
the Golden Angle Polar Fourier Domains (GAPFDs) and the benefit of
the fast and accurate computation of the DFT over a Linogram Fourier
Domain (LFD). Both the GAPFD and the LFD are presented below to motivate
our choices, after which we introduce the new GALFD domains. Section~\ref{sec:Method}
provides the mathematical theory required for the development of the
technique, named GALE, that we propose for the approximate computation
of the DFT over various domains, including the GALFD. Section~\ref{sec:Algorithm}
discusses the details of turning the theory into a computational methodology.
Section~\ref{sec:Numerical-Experimentation} reports on our numerical
experiments using both simulated data and actual MRI data. Section~\ref{sec:Conclusions}
gives our conclusions. The appendices provide (for the sake of completeness)
some notations and algorithms that are either standard or well known
and also a discussion of the relevance of our work to practical MRI
in medicine.

\subsection{The Linogram Fourier Domain}

We define the subset $\mathfrak{L}_{M,N}$ of the real plane, called
an LFD, by $\mathfrak{L}_{M,N}:=\mathfrak{H}_{M,N}\cup\mathfrak{V}_{M,N}$,
where the sets $\mathfrak{H}_{M,N}$, and $\mathfrak{V}_{M,N}$ are
given according to
\begin{eqnarray*}
\fl\mathfrak{H}_{M,N}:=\Biggl\{\Biggl(\frac{2\pi I}{M},\frac{2\pi I}{M}\frac{4J}{N}\Biggr):J\in\{-\nicefrac{N}{4} & {}+1,-\nicefrac{N}{4}+2,\dots,\nicefrac{N}{4}\}\\
 & \text{and}\quad I\in\left\{ -\nicefrac{M}{2},-\nicefrac{M}{2}+1,\dots,\nicefrac{M}{2}-1\right\} \Biggr\},
\end{eqnarray*}
\begin{eqnarray*}
\fl\mathfrak{V}_{M,N}:=\Biggl\{\Biggl(\frac{2\pi I}{M}\frac{4J}{N},\frac{2\pi I}{M}\Biggr):J\in\{-\nicefrac{N}{4} & ,-\nicefrac{N}{4}+1,\dots,\nicefrac{N}{4}-1\}\\
 & \text{and}\quad I\in\left\{ -\nicefrac{M}{2}+1,-\nicefrac{M}{2}+2,\dots,\nicefrac{M}{2}\right\} \Biggr\}.
\end{eqnarray*}
(See \cite[(9.27), (9.28)]{her09}). An LFD is comprised of the intersections
of $M$ equally spaced concentric squares and $N$ rays going through
the origin, totaling $(M-1)N+1$ different points. Figure~\ref{fig:The-LSS},
left, shows an example of such a domain.\footnote{The name linogram was originally assigned to a sampling methodology
for CT reconstruction that allowed fast and precise inversion of projection
data \cite[Section 9.3]{edh87,ehr88,her09}. The underlying idea is
to use a parametrization $(u,v)$ of lines in the plane with the property
that the locus of all points in $(u,v)$ space that corresponds to
the set of lines that go through a fixed point in the plane is itself
a line; hence the name ``linogram.''} As before, it is not possible to add just a small number of points
to an LFD, even though for positive integers $\mu$ and $\nu$ we
have $\mathfrak{L}_{M,N}\subset\mathfrak{L}_{\mu M,\nu N}$.

\begin{figure}
\centering{}\includegraphics[width=0.5\columnwidth]{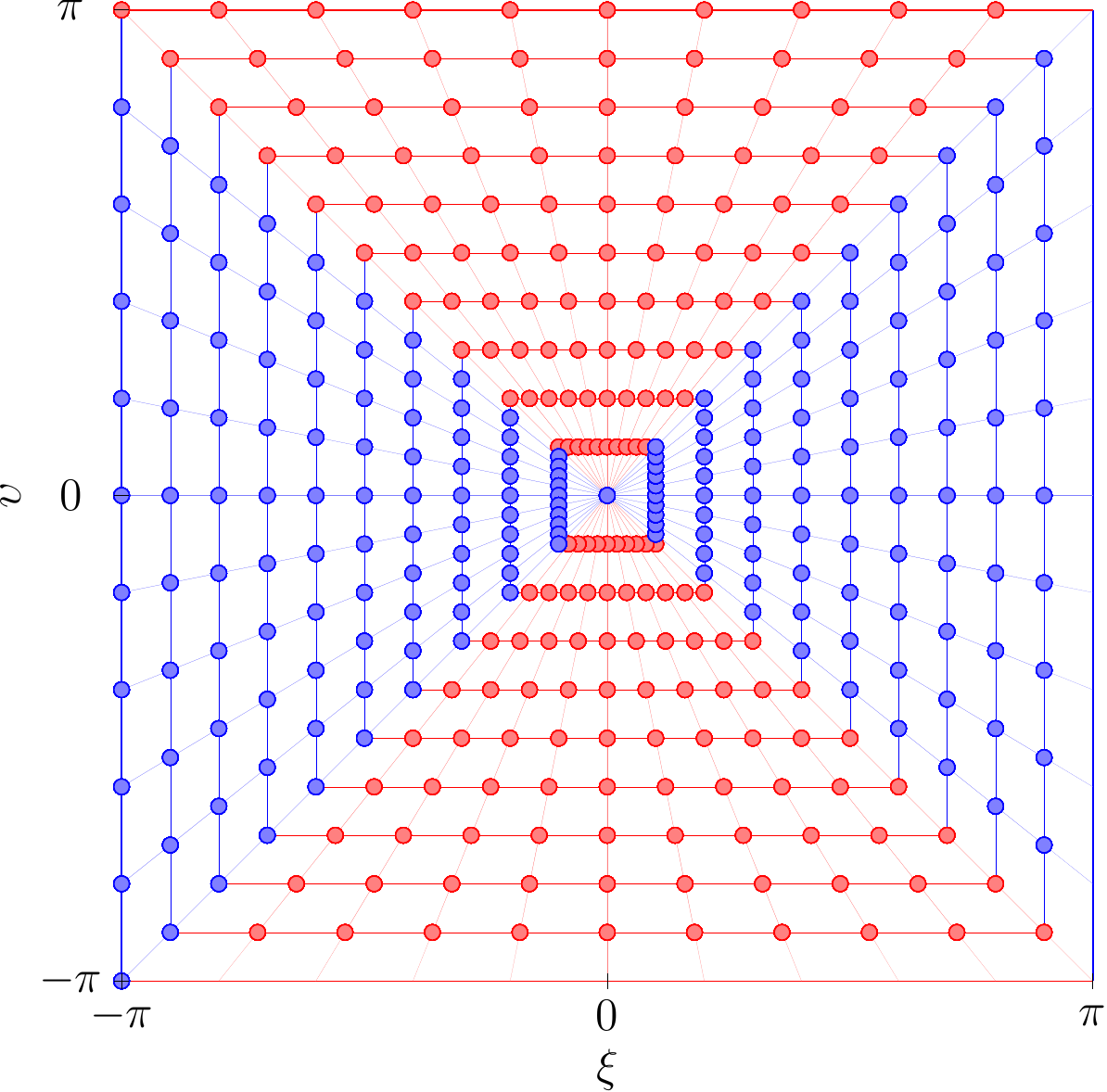}\includegraphics[width=0.5\columnwidth]{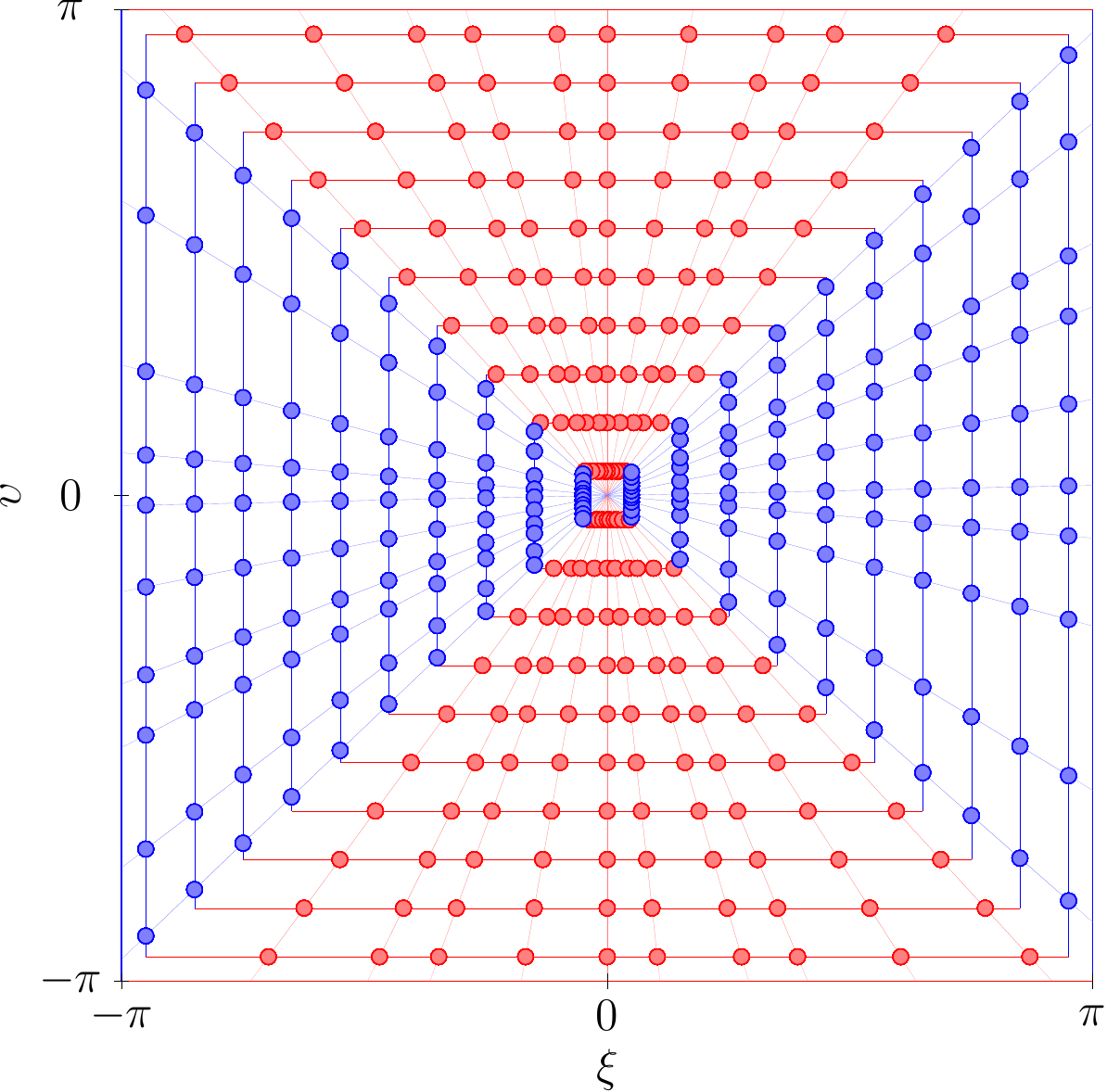}\caption{Left: The LFD $\mathfrak{L}_{20,20}$. The points of $\mathfrak{V}_{20,20}$
are displayed in red and the points of $\mathfrak{H}_{20,20}$ are
displayed in blue. Right: Graphical representation of the GALFD $\mathfrak{N}_{20,20}$.
The LRFSs with $\theta\in[\nicefrac{\pi}{4},\nicefrac{3\pi}{4})$
are shown in red and LRFSs with $\theta\in[\nicefrac{3\pi}{4},\nicefrac{5\pi}{4})$
are shown in blue.\label{fig:GALFD} \label{fig:The-LSS}}
\end{figure}

Because of the Fourier Slice Theorem~\cite[(9.7)]{her09}, the mathematics
of CT allows us to think of the sampling obtained by a CT scanner
as equivalent to a sampling of the Fourier space. Unlike in MRI, where
direct acquisition of samples of the Fourier space is possible in
arbitrary patterns, it is not possible to obtain directly the DFT
over a CFD from CT data, due to the nature of the technique, but it
is indeed possible to obtain Fourier samples over a PFD or over an
LFD. Inversion of PFD data, however, requires certain approximations
to be made or the use of computationally more expensive operations
than just numerical Fourier transformations. On the other hand, inversion
of LFD data is not only theoretically possible~\cite{edh87}, but
is also precise and computationally effective in practice~\cite{ehr88}.
In~\cite{ahr90} LFDs appeared in the literature for the first time
in connection with an MRI application. Efficient computation of the
forward operator $\mathrm{D}_{\mathfrak{L}_{M,N}}\boldsymbol{x}$
is also discussed in~\cite{acd08a}, where the DFT over an LFD is
named the Pseudo Polar Fourier Transform.

\subsection{The Golden Angle Polar Fourier Domain}

\begin{figure}
\centering{}\animategraphics[width=0.5\columnwidth,autoplay,loop]%
{1}{gapfd_}{0}{20}\includegraphics[width=0.5\columnwidth]{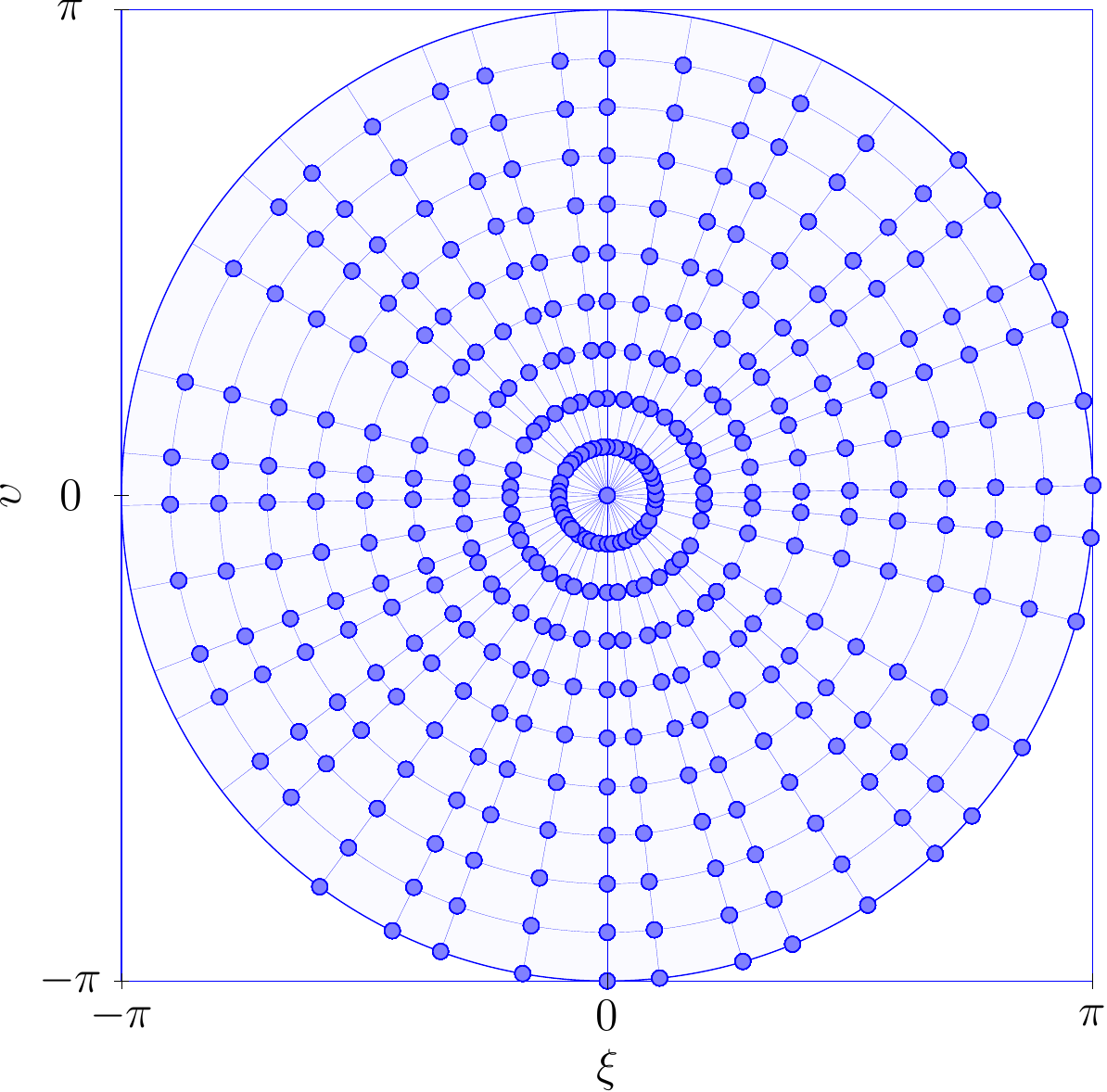}\caption{Left: Animation of the sequence of GAPFDs $\{\mathfrak{G}_{20,0},\mathfrak{G}_{20,1},\dots,\mathfrak{G}_{20,20}\}$,
showing successive inclusion of the golden-angle spaced rays (JavaScript
capable PDF viewer is required). Right: The GAPFD $\mathfrak{G}_{20,20}$.\label{fig:GAPFD-animation}}
\end{figure}

The golden angle is $\Theta:=\nicefrac{\pi}{\phi}$, where $\phi:=\nicefrac{\left(1+\sqrt{5}\right)}{2}$
is the golden ratio. We then define the GAPFD $\mathfrak{G}_{M,N,\Theta_{0}}$
by 
\begin{eqnarray*}
\fl\mathfrak{G}_{M,N,\Theta_{0}}:=\Biggl\{\Biggl(\frac{2\pi I}{M}\cos\theta_{J},\frac{2\pi I}{M}\sin\theta_{J}\Biggr):J & {}\in\{0,1,\dots,N-1\}\\
 & \text{and}\quad I\in\left\{ -\nicefrac{M}{2},-\nicefrac{M}{2}+1,\dots,\nicefrac{M}{2}-1\right\} \Biggr\},
\end{eqnarray*}
where $\theta_{J}:=\left(\Theta_{0}+J\Theta\right)$, with $\Theta_{0}\in[0,2\pi]$.
In principle, there is no reason to choose $\Theta$ to be $\nicefrac{\pi}{\phi}$.
For example, the PFD $\mathfrak{P}_{M,N}$ would be equal to a rotation
of the GAPFD $\mathfrak{G}_{M,N,\Theta_{0}}$ if $\Theta$ was replaced
by $\nicefrac{\pi}{N}$. However, a GAPFD offers the advantages that
the fine inclusion $\mathfrak{G}_{M,N,\Theta_{0}}\subset\mathfrak{G}_{M,N+1,\Theta_{0}}$
holds and that all of the GAPFDs present almost evenly distributed
rays in the Fourier space~\cite{wsk06}.

These two properties together are of practical significance. They
mean that new data can be incrementally added to an existing GAPFD
sampling, as illustrated in Figure~\ref{fig:GAPFD-animation}, in
order to improve reconstruction while maintaining the overall structure
of the dataset, a property that has been taken advantage of in several
applications, see~\cite{bth18,cld17,fgb14} and references therein
for more on this subject.

\subsection{The Golden Angle Linogram Fourier Domain}

We now present our proposed family of domains, of which each member
will be named a Golden Angle Linogram Fourier Domain (GALFD). A GALFD
is an intermediate between an LFD and a GAPFD. It shares with an LFD
the arrangement of the points of each of its rays into concentric
squares, but unlike an LFD, the rays are distributed following the
same golden angle spacing of a GAPFD. This way we retain the practical
advantages of a GAPFD in comparison to a PFD in applications, while
computation of the DFT over a GALFD can be performed more efficiently
and accurately than computation of the DFT over a GAPFD or a PFD.
How to perform the computation of the DFT over a GALFD is the subject
of the next two sections.

We first define a Linogram Ray Fourier Subdomain (LRFS) $\mathfrak{R}_{M,\sigma}(\theta)$
for a positive even integer $M$, $\sigma\in\mathbb{R}$, and $\theta\in[\nicefrac{\pi}{4},\nicefrac{5\pi}{4})$.
Then a GALFD is defined as a union of such subdomains. The definition
of an LRFS is as follows:
\begin{equation}
\fl\mathfrak{R}_{M,\sigma}(\theta)=\left\{ \begin{array}{l}
\biggl\{\biggl(\frac{2\pi I}{M}+\sigma,\left(\frac{2\pi I}{M}+\sigma\right)\tan\theta\biggr):I\in\left\{ -\nicefrac{M}{2},-\nicefrac{M}{2}+1,\dots,\nicefrac{M}{2}-1\right\} \biggr\}\\
\qquad\qquad\qquad\qquad\qquad\qquad\qquad\qquad\qquad\qquad\quad\text{if }\theta\in\left[\nicefrac{3\pi}{4},\nicefrac{5\pi}{4}\right),\\
\biggl\{\biggl(\left(\frac{2\pi I}{M}-\sigma\right)\cot\theta,\frac{2\pi I}{M}-\sigma\biggr):I\in\left\{ -\nicefrac{M}{2}+1,-\nicefrac{M}{2}+2,\dots,\nicefrac{M}{2}\right\} \biggr\}\\
\qquad\qquad\qquad\qquad\qquad\qquad\qquad\qquad\qquad\qquad\quad\ \text{if }\theta\in\left[\nicefrac{\pi}{4},\nicefrac{3\pi}{4}\right).
\end{array}\right.\label{eq:LRFS-def}
\end{equation}

Let the angle constraining operator $\Lambda:\mathbb{R}\to\left[\nicefrac{\pi}{4},\nicefrac{5\pi}{4}\right)$
be given by
\[
\Lambda(\theta):=\left(\theta-\nicefrac{\pi}{4}\right)\%\pi+\nicefrac{\pi}{4},
\]
where $a\%b$ is, for real numbers $a$ and $b$, the remainder of
the division of $a$ by $b$. The range $\left[\nicefrac{\pi}{4},\nicefrac{5\pi}{4}\right)$
is conventional; we could have chosen any half-open interval of length
$\pi$. We define the GALFD $\mathfrak{N}_{M,N,\Theta_{0},\sigma}$
by (see Figure~\ref{fig:GALFD}, right, for an example)
\[
\mathfrak{N}_{M,N,\Theta_{0},\sigma}:=\bigcup_{J=0}^{N}\mathfrak{R}_{M,\sigma}\left(\Lambda\left(\Theta_{0}+J\Theta\right)\right).
\]
Whenever possible, we simplify the notation by using $\mathfrak{N}_{M,N}:=\mathfrak{N}_{M,N,\nicefrac{\pi}{2},\nicefrac{\pi}{M}}$.

In the next sections we describe our method for efficiently computing
the DFT over a GALFD. Section~\ref{sec:Method} presents the required
mathematical development, including an upper bound on the incurred
approximation error. Section~\ref{sec:Algorithm} provides a precise
algorithmic formulation for the whole procedure and discusses floating
point operations count and memory-related issues.

\section{Mathematical Foundations\label{sec:Method}}

In the present section we discuss the mathematical foundations of
the method we propose for the computation of the DFT over an LRFS.
We give an approximation of $\mathrm{D_{\mathfrak{R}_{M,\sigma}(\theta)}}\left[\boldsymbol{x}\right]$
for any fixed $\theta\in\left[\nicefrac{\pi}{4},\nicefrac{3\pi}{4}\right)$
and $\boldsymbol{x}\in\mathbb{C}^{mn}$. We assume that $M\ge m$,
and that $M$ is even. The evenness assumption is used only for simplicity
of exposition, but $M\ge m$ is useful for computational efficiency.
The reasoning for $\theta\in\left[\nicefrac{3\pi}{4},\nicefrac{5\pi}{4}\right)$
is analogous and consequently we do not present its repetitive details.
In the algorithmic description of the next section, however, the computational
details are given in a concise but complete way.

Consider an arbitrary $\theta\in\left[\nicefrac{\pi}{4},\nicefrac{3\pi}{4}\right)$
and write $c=\cot\theta$. We proceed from the definition

\begin{eqnarray*}
\fl\mathcal{D}[\boldsymbol{x}]\left(\left(\frac{2\pi I}{M}-\sigma\right)c,\frac{2\pi I}{M}-\sigma\right) & {}=\sum_{i=0}^{m-1}\sum_{j=0}^{n-1}x_{i,j}e^{-\imath\left(j\left(\frac{2\pi I}{M}-\sigma\right)c+i\left(\frac{2\pi I}{M}-\sigma\right)\right)}\\
 & {}=\sum_{j=0}^{n-1}\sum_{i=0}^{m-1}x_{i,j}e^{-\imath\left(j\left(\frac{2\pi I}{M}-\sigma\right)c+i\left(\frac{2\pi I}{M}-\sigma\right)\right)}\\
 & {}=\sum_{j=0}^{n-1}e^{-\imath j\left(\frac{2\pi I}{M}-\sigma\right)c}\sum_{i=0}^{m-1}x_{i,j}e^{-\imath i\left(\frac{2\pi I}{M}-\sigma\right)}\\
 & {}=\sum_{j=0}^{n-1}X_{I,j}e^{-\imath j\left(\frac{2\pi I}{M}-\sigma\right)c},
\end{eqnarray*}
where (note that this is a one-dimensional discrete Fourier transform
of $x_{i,j}$ with respect to $i$, refer to~\ref{sec:FFT} for details)
\begin{equation}
X_{I,j}:=\sum_{i=0}^{m-1}x_{i,j}e^{\imath i\sigma}e^{-\imath i\frac{2\pi I}{M}}.\label{eq:XIj}
\end{equation}
Now we rewrite these computations as
\begin{equation}
\mathcal{D}[\boldsymbol{x}]\left(\left(\frac{2\pi I}{M}-\sigma\right)c,\frac{2\pi I}{M}-\sigma\right)=\sum_{j=0}^{n-1}X_{I,j}e^{-\imath j\frac{2\pi\eta}{N_{L}}\alpha_{I}},\label{eq:DTFT-eta}
\end{equation}
where $N_{L}\ge2n$ is an integer divisible by $4$, $\eta=\nicefrac{cN_{L}}{4}$
and $\alpha_{I}=\nicefrac{4I}{M}-\nicefrac{2\sigma}{\pi}$. If $\eta$
were an integer with $\eta\in\left\{ -\nicefrac{N_{L}}{4},-\nicefrac{N_{L}}{4}+1,\dots,\nicefrac{N_{L}}{4}-1\right\} $,
then the right hand side of (\ref{eq:DTFT-eta}) would be a part of
the output of a Chirp-Z Transform (CZT)~\cite{blu70} of length $\nicefrac{N_{L}}{2}$,
which can be efficiently computed with the use of one FFT and one
Inverse FFT (IFFT)\footnote{We provide the definition of a mild generalization of the CZT and
describe an algorithm for its computation in detail in \ref{sub:CZT}.}. This is what makes the computation of the DFT over an LFD efficient.
However, in the LRFS case that we are considering, where $c=\cot\theta$
is not constrained to specific values in the interval {[}-1,1{]},
$\eta$ is not necessarily an integer, and we need an appropriate
theoretical tool to help us to deal with this situation.

As a preliminary step, we provide the definition of the Continuous
Fourier Transform (CFT). Let $f:\mathbb{R}^{n}\to\mathbb{C}$ be an
absolutely integrable function, its CFT $\hat{f}:\mathbb{R}^{n}\to\mathbb{C}$
is 
\[
\hat{f}(\boldsymbol{\omega}):=\mathcal{F}[f](\boldsymbol{\omega}):=\int_{\mathbb{R}^{n}}f(\boldsymbol{x})e^{-\imath\langle\boldsymbol{x},\boldsymbol{\omega}\rangle}\mathrm{d}\boldsymbol{x},
\]
where $\langle\boldsymbol{x},\boldsymbol{\omega}\rangle$ is the inner
product between $\boldsymbol{x}\in\mathbb{R}^{n}$ and $\boldsymbol{\omega}\in\mathbb{R}^{n}$.
The following is a very useful result due to Fourmont~\cite{fou03}.
Since the differences in our statement are due only to the notation
and to the normalization of the CFT, we refer the reader to~\cite[Proposition~1]{fou03}
for a proof.
\begin{theorem}
\label{thm:Fourmont}Let $\tau\in\mathbb{R}$ and $\delta\in\mathbb{R}$
be such that $0<\nicefrac{\pi}{\delta}<\tau<2\pi-\nicefrac{\pi}{\delta}$.
Let also $W:\mathbb{R}\to\mathbb{R}$ be continuous and piecewise
continuously differentiable in $[-\tau,\tau]$, vanishing outside
$[-\tau,\tau]$ and nonzero in $\left[-\nicefrac{\pi}{\delta},\nicefrac{\pi}{\delta}\right]$,
and let $\hat{W}:\mathbb{R}\to\mathbb{R}$ be its Fourier transform.
Then, for $\zeta\in\mathbb{R}$ and $|t|\leq\nicefrac{\pi}{\delta}$,
we have
\[
e^{-\imath\zeta t}=\frac{1}{2\pi W(t)}\sum_{J\in\mathbb{Z}}\hat{W}(\zeta-J)e^{-\imath Jt}.
\]

\end{theorem}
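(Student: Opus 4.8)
The plan is to recognize the sum on the right-hand side as the Poisson summation formula applied to a modulated copy of $W$, after which the two inequalities constraining $\tau$ and $\delta$ turn out to be exactly what is needed to annihilate all but one term of the resulting periodization. Throughout I fix $\zeta\in\mathbb{R}$ and $t$ with $|t|\le\nicefrac{\pi}{\delta}$.

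First I would set $\phi(s):=W(s)\,e^{-\imath s\zeta}$, which is absolutely integrable and supported in $[-\tau,\tau]$ because $W$ is. A direct computation of its CFT gives $\hat{\phi}(\omega)=\int_{-\tau}^{\tau}W(s)e^{-\imath s\zeta}e^{-\imath s\omega}\,\mathrm{d}s=\hat{W}(\zeta+\omega)$, so that $\hat{\phi}(J)=\hat{W}(\zeta+J)$ for every $J\in\mathbb{Z}$. Next I would invoke the Poisson summation formula with period $2\pi$, which under the normalization of the CFT used here reads $\sum_{k\in\mathbb{Z}}\phi(t+2\pi k)=\tfrac{1}{2\pi}\sum_{J\in\mathbb{Z}}\hat{\phi}(J)e^{\imath Jt}$. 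Substituting the value of $\hat{\phi}(J)$ and reindexing $J\mapsto-J$ on the right turns this into $\sum_{k\in\mathbb{Z}}\phi(t+2\pi k)=\tfrac{1}{2\pi}\sum_{J\in\mathbb{Z}}\hat{W}(\zeta-J)e^{-\imath Jt}$, whose right member is precisely the series appearing in the statement.

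It then remains to evaluate the periodization on the left. Since $W$, and hence $\phi$, vanishes outside $[-\tau,\tau]$, the term indexed by $k$ can be nonzero only when $t+2\pi k\in[-\tau,\tau]$. For $|t|\le\nicefrac{\pi}{\delta}$ the hypothesis $\tau<2\pi-\nicefrac{\pi}{\delta}$ forces $t+2\pi\ge 2\pi-\nicefrac{\pi}{\delta}>\tau$ and $t-2\pi\le\nicefrac{\pi}{\delta}-2\pi<-\tau$, so every shift with $k\neq 0$ lands outside the support and contributes nothing; only the $k=0$ term survives, giving $\phi(t)=W(t)e^{-\imath t\zeta}$. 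Because $\nicefrac{\pi}{\delta}<\tau$ and $W$ is nonzero on $[-\nicefrac{\pi}{\delta},\nicefrac{\pi}{\delta}]$, the factor $W(t)$ is nonzero, so it is legitimate to divide by it. Equating the two expressions yields $W(t)e^{-\imath t\zeta}=\tfrac{1}{2\pi}\sum_{J\in\mathbb{Z}}\hat{W}(\zeta-J)e^{-\imath Jt}$, and dividing by $W(t)$ produces the claimed identity.

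The main obstacle is not the algebra but the rigorous justification of the Poisson summation step. Here the regularity hypotheses on $W$ do the work: being continuous, piecewise continuously differentiable, and compactly supported, one integration by parts gives $\hat{W}(\omega)=O(\nicefrac{1}{|\omega|})$, and the standard refinement of this estimate furnishes enough decay (together with the evident pointwise continuity of the left-hand periodization) for the summation formula to hold pointwise at $t$. This regularity-versus-convergence bookkeeping is the delicate part; it is carried out in~\cite{fou03}, to which the statement already defers, so I would cite it rather than reprove it.
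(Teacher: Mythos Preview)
Your Poisson-summation argument is correct. The paper itself, however, does not prove Theorem~\ref{thm:Fourmont}; it merely cites~\cite[Proposition~1]{fou03} and remarks that the only differences from Fourmont's statement are in notation and in the normalization of the Fourier transform. What you have written is therefore more than the paper offers and is essentially the argument one expects in the cited reference: periodize $W(\cdot)e^{-\imath\zeta(\cdot)}$ with period $2\pi$, use the support condition $\tau<2\pi-\nicefrac{\pi}{\delta}$ to reduce the periodization to its $k=0$ term for $|t|\le\nicefrac{\pi}{\delta}$, and then divide by $W(t)$, which the nonvanishing hypothesis on $[-\nicefrac{\pi}{\delta},\nicefrac{\pi}{\delta}]$ permits. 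Your acknowledgment that the pointwise validity of the Poisson summation formula is the only delicate point, and your deferral of that point to~\cite{fou03}, is in the same spirit as the paper's own handling of the theorem.
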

\par{}As an immediate consequence of the above equality, we have,
for $|t-\varpi|\leq\pi/\delta$,
\begin{eqnarray}
e^{-\imath\zeta t} & {}=e^{-\imath\zeta\varpi}e^{-\imath\zeta(t-\varpi)}\nonumber \\
 & {}=\frac{e^{-\imath\zeta\varpi}}{2\pi W(t-\varpi)}\sum_{J\in\mathbb{Z}}\hat{W}(\zeta-J)e^{-\imath J(t-\varpi)}\nonumber \\
 & {}=\frac{1}{2\pi W(t-\varpi)}\sum_{J\in\mathbb{Z}}\hat{W}(\zeta-J)e^{-\imath(\zeta-J)\varpi}e^{-\imath Jt}.\label{eq:Fourmont-series}
\end{eqnarray}

Now, let $t_{j}^{I}:=\left(\nicefrac{(2\pi j)}{N_{L}}\right)\alpha_{I}$
(where $\alpha_{I}$ is as in (\ref{eq:DTFT-eta})) and 
\begin{equation}
\varpi_{I}:=\frac{\pi(n-1)}{N_{L}}\alpha_{I}.\label{eq:varpi_I}
\end{equation}
At this point, we must introduce the constraint $|\sigma|<\nicefrac{\pi}{(n-1)}$.
If this condition is true, taking into consideration that $N_{L}\ge2n,$
that $j\in\{0,1,\dots,n-1\}$, and that $|I|\le\nicefrac{M}{2}$,
we have
\begin{eqnarray}
|t_{j}^{I}-\varpi_{I}| & {}=\left|\frac{\pi\alpha_{I}}{N_{L}}\left(2j-(n-1)\right)\right|=\frac{\pi}{N_{L}}\left|\frac{4I}{M}-\frac{2\sigma}{\pi}\right|\left|2j-(n-1)\right|\nonumber \\
 & {}\leq\frac{\pi}{N_{L}}\left|\frac{4I}{M}-\frac{2\sigma}{\pi}\right|(n-1)\le\frac{\pi}{2}\left|\frac{4I}{M}-\frac{2\sigma}{\pi}\right|\frac{n-1}{n}\nonumber \\
 & {}\le\frac{\pi}{2}\left(\left|\frac{4I}{M}\right|+\frac{2|\sigma|}{\pi}\right)\frac{n-1}{n}\le\frac{\pi}{2}\left(2+\frac{2|\sigma|}{\pi}\right)\frac{n-1}{n}\nonumber \\
 & {}=\pi\left(1+\frac{|\sigma|}{\pi}\right)\frac{n-1}{n}<\pi\left(1+\frac{1}{n-1}\right)\frac{n-1}{n}=\pi.\label{eq:t-range}
\end{eqnarray}

Now, define, for an $\varepsilon\in(0,1)$ such that $\varepsilon\approx1$
(we use the specific value $\varepsilon=1-10^{-4}$),
\begin{equation}
\tau_{I}:=\pi+\varepsilon\left(\pi-|\varpi_{I}|\right).\label{eq:tau_I}
\end{equation}
Note that a computation similar to~\eqref{eq:t-range} gives $|\varpi_{I}|<\pi$;
therefore, because of the definition of $\tau_{I}$, we have $\pi<\tau_{I}<2\pi$.
Thus, if some $W_{I}:\mathbb{R}\to\mathbb{R}$ is smooth for $t\in\left[-\tau_{I},\tau_{I}\right]$,
$W_{I}(t)\neq0$ for $t\in\left(-\tau_{I},\tau_{I}\right)$, and $W_{I}(t)=0$
for $t\notin\left(-\tau_{I},\tau_{I}\right)$, then $W_{I}$ satisfies
all conditions for $W$ in the statement of Theorem~\ref{thm:Fourmont}
with any $\delta>\max\left\{ \nicefrac{\pi}{\tau_{I}},\nicefrac{\pi}{\left(2\pi-\tau_{I}\right)}\right\} $.
Hence, in view of~\eqref{eq:t-range},~\eqref{eq:Fourmont-series}
can be used with $W=W_{I}$, $t=t_{j}^{I}$, $\zeta=\eta$, and $\varpi=\varpi_{I}$
inside of~\eqref{eq:DTFT-eta} in order to yield
\begin{eqnarray}
\fl\mathcal{D}[\boldsymbol{x}]\Biggl(\left(\frac{2\pi I}{M}-\sigma\right)c,\frac{2\pi I}{M} & {}-\sigma\Biggr)\nonumber \\
 & {}=\sum_{j=0}^{n-1}X_{I,j}e^{-\imath\eta t_{j}^{I}}\nonumber \\
 & {}=\sum_{j=0}^{n-1}X_{I,j}\frac{1}{2\pi W_{I}\left(t_{j}^{I}-\varpi_{I}\right)}\sum_{J\in\mathbb{Z}}\hat{W}_{I}(\eta-J)e^{-\imath(\eta-J)\varpi_{I}}e^{-\imath Jt_{j}^{I}}\nonumber \\
 & {}=\frac{1}{2\pi}\sum_{J\in\mathbb{Z}}\hat{W}_{I}(\eta-J)e^{-\imath(\eta-J)\varpi_{I}}\sum_{j=0}^{n-1}X_{I,j}\frac{1}{W_{I}\left(t_{j}^{I}-\varpi_{I}\right)}e^{-\imath Jt_{j}^{I}}\nonumber \\
 & {}=\frac{1}{2\pi}\sum_{J\in\mathbb{Z}}\frac{\hat{W}_{I}(\eta-J)}{e^{\imath(\eta-J)\varpi_{I}}}\sum_{j=0}^{n-1}X_{I,j}\frac{1}{W_{I}\left(t_{j}^{I}-\varpi_{I}\right)}e^{-\imath j\frac{2\pi J}{N_{L}}\alpha_{I}}.\label{eq:nontruncated-sum}
\end{eqnarray}
This means that the values of the DTFT at the desired points can be
approximated by a sum
\begin{eqnarray}
\fl\mathcal{D}[\boldsymbol{x}]\Biggl(\left(\frac{2\pi I}{M}-\sigma\right)c,\frac{2\pi I}{M} & {}-\sigma\Biggr)\nonumber \\
 & {}\approx\frac{1}{2\pi}\sum_{|J-\eta|\leq S}\frac{\hat{W}_{I}(\eta-J)}{e^{\imath(\eta-J)\varpi_{I}}}\sum_{j=0}^{n-1}X_{I,j} & \frac{1}{W_{I}\left(t_{j}^{I}-\varpi_{I}\right)}e^{-\imath j\frac{2\pi J}{N_{L}}\alpha_{I}}\nonumber \\
 &  & \qquad\qquad\ {}=:D_{I,N_{L},S}(\theta),\label{eq:truncated-sum}
\end{eqnarray}
where the positive integer $S$, which dictates the number of terms
in the truncated sum, is a parameter determining the accuracy of the
approximation. Another such parameter is the length of the support
of $W_{I}$. We return to this topic when we derive an upper bound
for the approximation error.

Now we tackle the issue of determining the ``window functions''
$W_{I}:\mathbb{R}\to\mathbb{R}$ that satisfy the conditions of Theorem~\ref{thm:Fourmont}.
Of course, there is some freedom in the selection of these window
functions and many have been proposed and used in the literature.
Among the most successful and popular is the family of Kaiser-Bessel~\cite{kas80}
Window Functions (KBWFs). Members of this family of functions have
finite support as required for the theorem to hold, and they also
have the desirable property that the infinite series on the right-hand
side of~\eqref{eq:Fourmont-series} converges very quickly, thus
making the approximation by a finite summation precise even when only
a few terms are used. The definition of a KBWF is as follows. For
parameters $\beta\in\mathbb{R}$ and $\tau\in\mathbb{R}$,
\[
K_{\beta,\tau}(t):=\left\{ \begin{array}{ll}
{\displaystyle \frac{\mathcal{I}_{0}\left(\beta\sqrt{1-(t/\tau)^{2}}\right)}{\mathcal{I}_{0}(\beta)},} & \quad\text{if}\quad|t|\leq\tau,\\
0, & \quad\text{otherwise,}
\end{array}\right.
\]
where $\mathcal{I}_{0}:\mathbb{R}\to\mathbb{R}$ is the (real) modified
zero-order cylindrical Bessel function. Note that
\[
\hat{K}_{\beta,\tau}(\omega)=\frac{2\tau}{\mathcal{I}_{0}(\beta)}\frac{\sinh\left(\beta\sqrt{1-(\omega\tau/\beta)^{2}}\right)}{\beta\sqrt{1-(\omega\tau/\beta)^{2}}}.
\]
In our methods we use specifically the following window functions:
For $\beta_{I}:=S\tau_{I}$, 
\begin{equation}
W_{I}:=K_{\beta_{I},\tau_{I}}.\label{eq:window-definition}
\end{equation}
Figure~\ref{fig:Plots-of-Kaiser-Bessel} shows examples of the Kaiser-Bessel
window functions described above for some combination of parameters.

\begin{figure}
\includegraphics[width=0.5\columnwidth]{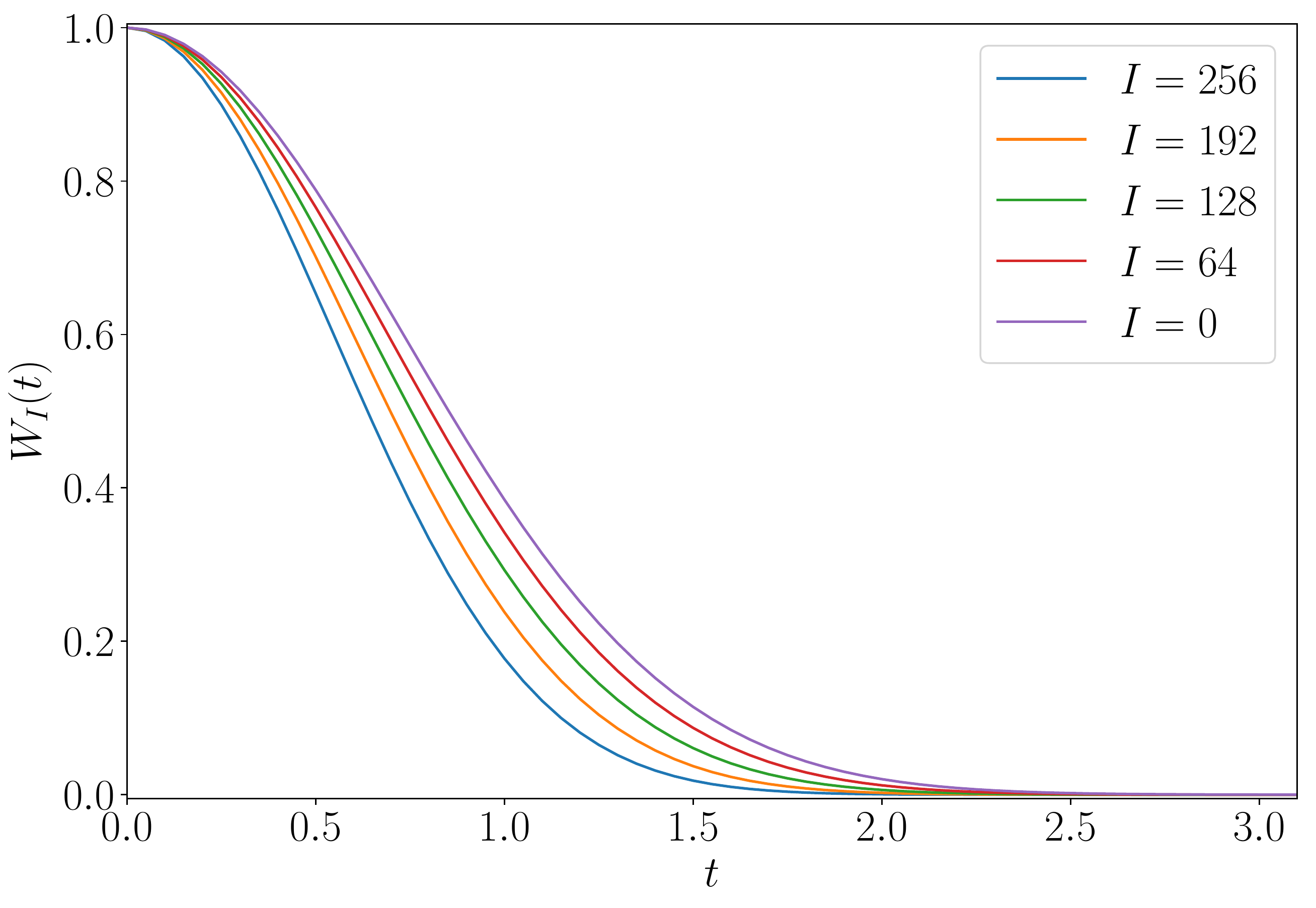}\includegraphics[width=0.5\columnwidth]{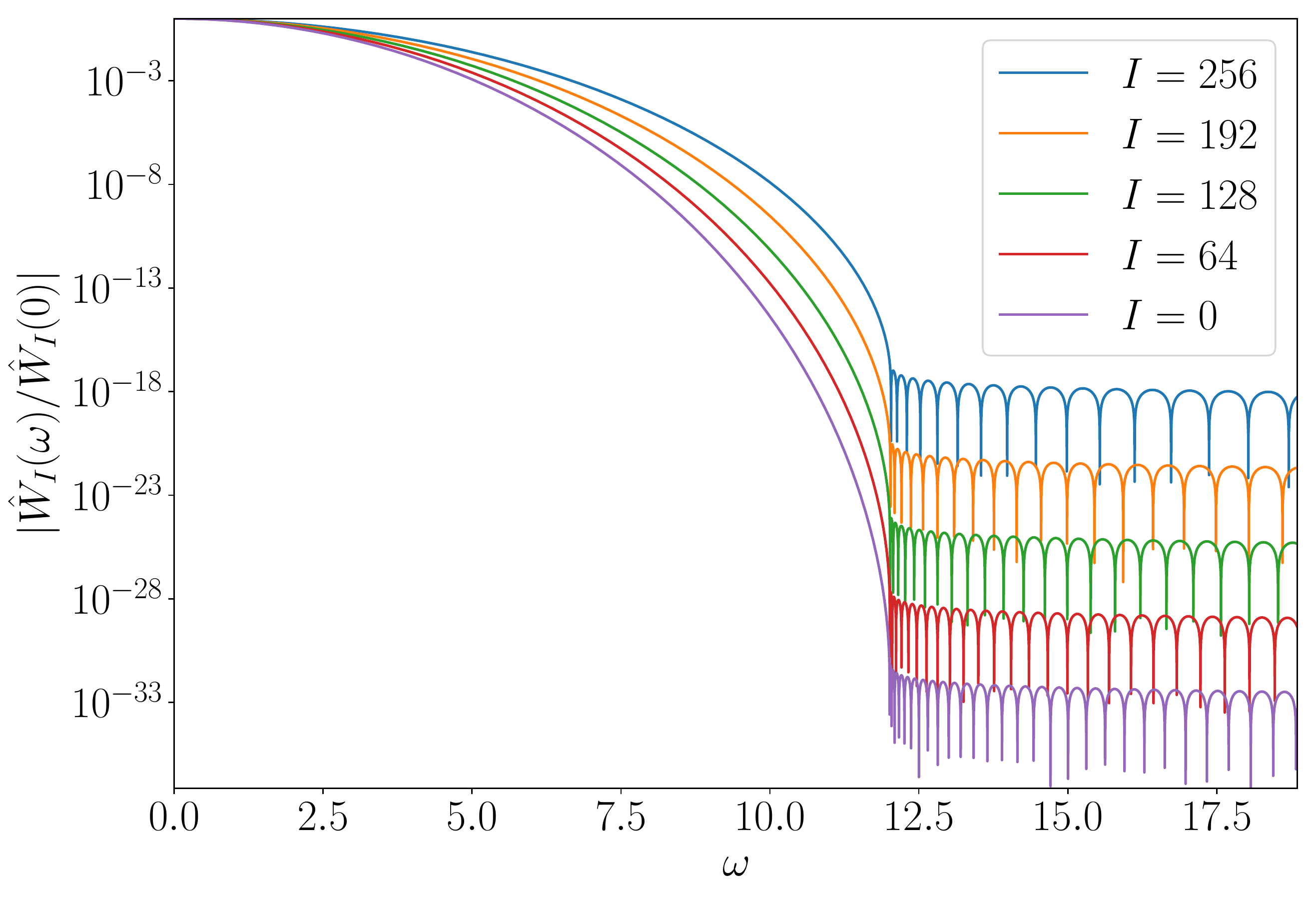}\caption{Plots of Kaiser-Bessel window functions (left) and respective Fourier
transforms (right) for several values of $I$ with parameters $m=n=M=512$,
$N_{L}=2.25m$, $\sigma=\nicefrac{\pi}{M}$, and $\epsilon=1-10^{-4}$.\label{fig:Plots-of-Kaiser-Bessel}}
\end{figure}

We now present our upper bound for the approximation error due to
the truncation~\eqref{eq:truncated-sum}.
\begin{theorem}
Let $\theta\in\left[\nicefrac{\pi}{4},\nicefrac{3\pi}{4}\right),$
$|\sigma|<\nicefrac{\pi}{(n-1)}$, $\boldsymbol{x}\in\mathbb{C}^{mn}$,
$M\ge m$, $N_{L}\ge2n$, $1<S\le15$, $|I|\le\nicefrac{M}{2}$, $\varpi_{I}$
as in~\eqref{eq:varpi_I}, $\tau_{I}$ as in~\eqref{eq:tau_I},
and $D_{I,N_{L},S}(\theta)$ as in~\eqref{eq:truncated-sum} with~\eqref{eq:window-definition},
then
\begin{equation}
\fl\left|\mathcal{D}[\boldsymbol{x}]\left(\left(\frac{2\pi I}{M}-\sigma\right)\cot\theta,\frac{2\pi I}{M}-\sigma\right)-D_{I,N_{L},S}(\theta)\right|\le\frac{29.5\|\boldsymbol{x}\|_{1}}{\pi\mathcal{I}_{0}\left(S\sqrt{\tau_{I}^{2}-\varpi_{I}^{2}}\right)},\label{eq:error-bound}
\end{equation}
where $\|\boldsymbol{x}\|_{1}:=\sum_{i=0}^{m-1}\sum_{j=0}^{n-1}\left|x_{i,j}\right|$.\end{theorem}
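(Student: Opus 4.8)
The plan is to read $D_{I,N_{L},S}(\theta)$ as a truncation of the exact, absolutely convergent representation~\eqref{eq:nontruncated-sum} and to estimate the discarded tail. First I would subtract~\eqref{eq:truncated-sum} from~\eqref{eq:nontruncated-sum} and interchange the order of summation, which writes the error as
\[
\mathcal{D}[\boldsymbol{x}]\left(\left(\frac{2\pi I}{M}-\sigma\right)\cot\theta,\frac{2\pi I}{M}-\sigma\right)-D_{I,N_{L},S}(\theta)=\sum_{j=0}^{n-1}X_{I,j}\,E_{j},
\]
where $E_{j}$ is precisely the truncation error of Fourmont's identity~\eqref{eq:Fourmont-series}, evaluated at $t=t_{j}^{I}$, $\zeta=\eta$, $\varpi=\varpi_{I}$ and $W=W_{I}$, namely
\[
E_{j}=\frac{1}{2\pi W_{I}(t_{j}^{I}-\varpi_{I})}\sum_{|J-\eta|>S}\hat{W}_{I}(\eta-J)\,e^{-\imath(\eta-J)\varpi_{I}}\,e^{-\imath Jt_{j}^{I}}.
\]
Since each exponential in~\eqref{eq:XIj} has unit modulus, $|X_{I,j}|\le\sum_{i}|x_{i,j}|$, so $\sum_{j}|X_{I,j}|\le\|\boldsymbol{x}\|_{1}$, and the triangle inequality reduces the whole task to a uniform bound $\max_{j}|E_{j}|\le\frac{29.5}{\pi\,\mathcal{I}_{0}(S\sqrt{\tau_{I}^{2}-\varpi_{I}^{2}})}$.

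Next I would dispose of the window reciprocal. Sharpening~\eqref{eq:t-range}, the exact value $|t_{j}^{I}-\varpi_{I}|=\frac{\pi|\alpha_{I}|}{N_{L}}|2j-(n-1)|\le\frac{\pi|\alpha_{I}|(n-1)}{N_{L}}=|\varpi_{I}|$ shows that the arguments $t_{j}^{I}-\varpi_{I}$ never exceed $|\varpi_{I}|$ in magnitude. Because $W_{I}=K_{\beta_{I},\tau_{I}}$ is even and decreasing in $|t|$ on $[0,\tau_{I}]$ (as $\mathcal{I}_{0}$ is increasing), this gives the key lower bound $W_{I}(t_{j}^{I}-\varpi_{I})\ge W_{I}(\varpi_{I})$. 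Using $\beta_{I}=S\tau_{I}$ together with~\eqref{eq:window-definition} one has $W_{I}(\varpi_{I})=\mathcal{I}_{0}\!\left(S\sqrt{\tau_{I}^{2}-\varpi_{I}^{2}}\right)/\mathcal{I}_{0}(\beta_{I})$, which is exactly the quantity in the denominator of the claimed bound; this is what produces its stated form and simultaneously brings in a factor $\mathcal{I}_{0}(\beta_{I})$ that must be cancelled later.

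The remaining step is the tail sum. Factoring the phases via $e^{-\imath(\eta-J)\varpi_{I}}e^{-\imath Jt_{j}^{I}}=e^{-\imath\eta\varpi_{I}}e^{-\imath J u_{j}}$ with $u_{j}:=t_{j}^{I}-\varpi_{I}$, $|u_{j}|\le|\varpi_{I}|$, the bound on $|E_{j}|$ reduces to estimating the oscillatory sum $\sum_{|J-\eta|>S}\hat{W}_{I}(\eta-J)e^{-\imath J u_{j}}$. On the discarded range $|\omega|>S=\beta_{I}/\tau_{I}$ the argument of the $\sinh$ in the given formula for $\hat{K}_{\beta_{I},\tau_{I}}$ becomes imaginary, so that $\hat{W}_{I}(\omega)=\frac{2}{\mathcal{I}_{0}(\beta_{I})}\frac{\sin(\tau_{I}\sqrt{\omega^{2}-S^{2}})}{\sqrt{\omega^{2}-S^{2}}}$; its prefactor $1/\mathcal{I}_{0}(\beta_{I})$ cancels the $\mathcal{I}_{0}(\beta_{I})$ from the previous paragraph, leaving precisely $\mathcal{I}_{0}(S\sqrt{\tau_{I}^{2}-\varpi_{I}^{2}})$ in the denominator. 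Collecting the constants, the entire theorem comes down to the single uniform estimate $\bigl|\sum_{|J-\eta|>S}\frac{\sin(\tau_{I}\sqrt{(\eta-J)^{2}-S^{2}})}{\sqrt{(\eta-J)^{2}-S^{2}}}\,e^{-\imath J u_{j}}\bigr|\le 29.5$.

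The hard part is this last estimate, and the difficulty is genuine: because $K_{\beta_{I},\tau_{I}}$ has a jump at $\pm\tau_{I}$, the envelope $1/\sqrt{\omega^{2}-S^{2}}$ decays only like $1/|\omega|$, so the naive termwise bound $\sum|\hat{W}_{I}(\eta-J)|$ diverges (logarithmically) and cancellation is unavoidable. I would exploit the oscillation by splitting $\sin(\tau_{I}\sqrt{\omega^{2}-S^{2}})$ into $e^{\pm\imath\tau_{I}\sqrt{\omega^{2}-S^{2}}}$ and using $\sqrt{\omega^{2}-S^{2}}=\omega-S^{2}/(2\omega)+O(\omega^{-3})$, so that each piece behaves like a Dirichlet series with effective frequency $\tau_{I}\mp u_{j}\pmod{2\pi}$; the choices $\tau_{I}\in(\pi,2\pi)$ from~\eqref{eq:tau_I} and $\varepsilon=1-10^{-4}$ are precisely what keep these frequencies bounded away from $0$, so that summation by parts (against the bounded partial sums of the exponentials, and comparison of the slowly varying envelope to a convergent integral) yields a uniformly bounded sum, while the finitely many transition terms with $|\eta-J|\approx S$ are controlled directly by the peak value $\hat{W}_{I}(\pm S)=2\tau_{I}/\mathcal{I}_{0}(\beta_{I})$. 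Making this quantitative is where the explicit constant $29.5$ and the restriction $1<S\le 15$ must be extracted, and I expect this tail bound to be the main obstacle of the proof.
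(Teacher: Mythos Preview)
Your decomposition is exactly the paper's: write the error as $\sum_{j}X_{I,j}E_{j}$, bound $\sum_{j}|X_{I,j}|\le\|\boldsymbol{x}\|_{1}$ via~\eqref{eq:XIj}, observe that $|t_{j}^{I}-\varpi_{I}|\le|\varpi_{I}|$ so the Kaiser--Bessel factor in the denominator is minimized at the endpoints, and reduce everything to the single tail estimate $\bigl|\sum_{|\ell|>S}\sin(\tau_{I}\sqrt{\ell^{2}-S^{2}})/\sqrt{\ell^{2}-S^{2}}\,e^{\imath\ell t}\bigr|\le 29.5$ (with $\ell=\eta-J$, $t=t_{j}^{I}-\varpi_{I}$). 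Up to this point there is no daylight between your argument and the paper's.

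The one place your plan diverges is precisely the step you flag as the obstacle. The paper does \emph{not} prove the $29.5$ bound; it quotes it verbatim from Fourmont's thesis \cite[Corollar~2.5.12]{fou99}, which is also the origin of the range $1<S\le 15$. So you need not carry out the summation-by-parts programme at all: invoking that corollary finishes the proof in one line, and your speculation about the role of $\varepsilon=1-10^{-4}$ and $\tau_{I}\in(\pi,2\pi)$ in keeping frequencies away from~$0$ is unnecessary for the argument as written (Fourmont's bound is uniform in those parameters). Your instinct that the naive absolute-value bound diverges and that oscillation is essential is correct and is indeed what underlies Fourmont's result, but reproving it here would be reinventing a wheel the paper explicitly borrows.
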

\begin{proof}
Let us first consider the difference below (notice that, although
omitted from the notation, $\eta$ is a function of $N_{L}$ and $\theta$,
and $W_{I}$ depends on $N_{L}$ through the parameter $\tau_{I}$):
\begin{equation}
\fl\varepsilon_{j}^{I,N_{L},S}(\theta):=e^{-\imath\eta t_{j}^{I}}-\frac{1}{2\pi W_{I}\left(t_{j}^{I}-\varpi_{I}\right)}\sum_{|J-\eta|\le S}\hat{W}_{I}(\eta-J)e^{-\imath(\eta-J)\varpi_{I}}e^{-\imath Jt_{j}^{I}}.\label{eq:eps_def}
\end{equation}
Notice that $\varepsilon_{j}^{I,N_{L},S}(\theta)$ is well defined
since \eqref{eq:t-range} and \eqref{eq:tau_I} imply that $\left|t_{j}^{I}-\varpi_{I}\right|<\tau_{I}$,
which in turn, following definition~\eqref{eq:window-definition},
implies that $W_{I}\left(t_{j}^{I}-\varpi_{I}\right)>0$. Because
of~\eqref{eq:Fourmont-series} we have
\begin{eqnarray*}
\left|\varepsilon_{j}^{I,N_{L},S}(\theta)\right| & {}=\left|\frac{1}{2\pi W_{I}\left(t_{j}^{I}-\varpi_{I}\right)}\sum_{|J-\eta|>S}\hat{W}_{I}(\eta-J)e^{-\imath(\eta-J)\varpi_{I}}e^{-\imath Jt_{j}^{I}}\right|\\
 & {}=\frac{1}{2\pi W_{I}\left(t_{j}^{I}-\varpi_{I}\right)}\left|\sum_{|J-\eta|>S}\hat{W}_{I}(\eta-J)e^{\imath(\eta-J)(t_{j}^{I}-\varpi_{I})}\right|,
\end{eqnarray*}
since $\left|e^{-\imath\eta t_{j}^{I}}\right|=1$. By denoting $\ell=\eta-J$
and $t=t_{j}^{I}-\varpi_{I}$ we get
\begin{equation}
\left|\varepsilon_{j}^{I,N_{L},S}(\theta)\right|=\frac{1}{2\pi W_{I}(t)}\left|\sum_{|\ell|>S}\hat{W}_{I}(\ell)e^{\imath\ell t}\right|.\label{eq:simplified-difference}
\end{equation}

Now consider the specific window functions we use. Because $W_{I}=K_{\beta_{I},\tau_{I}}$
where $\beta_{I}=S\tau_{I}$,
\[
\hat{W}_{I}(\omega)=\frac{2}{\mathcal{I}_{0}(\beta_{I})}\frac{\sinh\left(\tau_{I}\sqrt{S^{2}-\omega^{2}}\right)}{\sqrt{S^{2}-\omega^{2}}}.
\]
For the specific case where $\ell>S$, the square roots are imaginary
and we can write this as
\[
\hat{W}_{I}(\ell)=\frac{2}{\mathcal{I}_{0}(\beta_{I})}\frac{\sin\left(\tau_{I}\sqrt{\ell^{2}-S^{2}}\right)}{\sqrt{\ell^{2}-S^{2}}}.
\]
 Thus, using this equality in~\eqref{eq:simplified-difference} and
considering that $W_{I}=K_{\beta_{I},\tau_{I}}$ we get, because $|t|<\tau_{I}$,
\[
\left|\varepsilon_{j}^{I,N_{L},S}(\theta)\right|=\frac{1}{\pi\mathcal{I}_{0}\left(S\sqrt{\tau_{I}^{2}-t^{2}}\right)}\left|\sum_{|\ell|>S}\frac{\sin\left(\tau_{I}\sqrt{\ell^{2}-S^{2}}\right)}{\sqrt{\ell^{2}-S^{2}}}\right|.
\]
Now we can apply~\cite[Corollar 2.5.12]{fou99}, which holds for
$1<S\leq15$, in order to obtain
\[
\left|\varepsilon_{j}^{I,N_{L},S}(\theta)\right|\le\frac{29.5}{\pi\mathcal{I}_{0}\left(S\sqrt{\tau_{I}^{2}-t^{2}}\right)}.
\]
Now, notice that because the Bessel function $\mathcal{I}_{0}$ is
monotonically increasing, the denominator is minimized when $t^{2}=\left(t_{j}^{I}-\varpi_{I}\right)^{2}$
is as close as possible to $\tau_{I}^{2}$. Equivalently, because
$\left(t_{j}^{I}-\varpi_{I}\right)^{2}<\pi^{2}<\tau_{I}^{2}$ (see
\eqref{eq:t-range} and the definition of $\tau_{I}$ in (\ref{eq:tau_I})),
this occurs when $\left(t_{j}^{I}-\varpi_{I}\right)^{2}$ is maximized
over $j\in\{0,1,\dots,n-1\}$, which occurs when $j=0$ or when $j=(n-1)$
because
\[
t_{j}^{I}-\varpi_{I}=\frac{\pi\alpha_{I}}{N_{L}}\left(2j-(n-1)\right).
\]
From the definition of $\varpi_{I}$ in (\ref{eq:varpi_I}) it follows
that 
\[
\left(t_{0}^{I}-\varpi_{I}\right)^{2}=\left(t_{(n-1)}^{I}-\varpi_{I}\right)^{2}=\varpi_{I}^{2}.
\]
Therefore we obtain the following upper bound for $\varepsilon_{j}^{I,N_{L},S}(\theta)$
that is uniform in $\theta$ and $j$:
\[
\left|\varepsilon_{j}^{I,N_{L},S}(\theta)\right|\le\frac{29.5}{\pi\mathcal{I}_{0}\left(S\sqrt{\tau_{I}^{2}-\varpi_{I}^{2}}\right)}=:\varepsilon_{I,N_{L},S}.
\]

Defining $\delta_{j}^{I,N_{L},S}(\theta)$ as below and using~\eqref{eq:nontruncated-sum},
\eqref{eq:truncated-sum}, and~\eqref{eq:eps_def} we complete the
proof:
\[
\delta_{j}^{I,N_{L},S}(\theta):=\left|\mathcal{D}[\boldsymbol{x}]\biggl(\biggl(\frac{2\pi I}{M}-\sigma\biggr)\cot\theta,\frac{2\pi I}{M}-\sigma\biggr)-D_{I,N_{L},S}(\theta)\right|.
\]
\begin{eqnarray*}
\delta_{j}^{I,N_{L},S}(\theta) & {}=\left|\sum_{j=0}^{n-1}X_{I,j}\varepsilon_{j}^{I,N_{L},S}(\theta)\right|\\
 & {}\leq\left|\sum_{j=0}^{n-1}X_{I,j}\right|\varepsilon_{I,N_{L},S}\le\sum_{j=0}^{n-1}\left|X_{I,j}\right|\varepsilon_{I,N_{L},S}\\
 & {}=\sum_{j=0}^{n-1}\left|\sum_{i=0}^{m-1}x_{i,j}e^{\imath i\sigma}e^{-\imath i\frac{2\pi I}{M}}\right|\varepsilon_{I,N_{L},S}\\
 & {}\le\varepsilon_{I,N_{L},S}\sum_{j=0}^{n-1}\sum_{i=0}^{m-1}\left|x_{i,j}\right|.
\end{eqnarray*}
 
\end{proof}
\par{}In the next section we give details of how to turn the above
development into practical algorithms. Here we discuss some aspects
of the error bound that we have just obtained. Consider fixed $N_{L}$
and $S$. We notice that $\tau_{I}^{2}-\varpi_{I}^{2}$ is larger
for the lower frequency elements of the GALFD (i.e., $I$ close to
$0$). This observation is important because the error bound in~\eqref{eq:error-bound}
is smaller when $\tau_{I}^{2}-\varpi_{I}^{2}$ is larger. This means
that, for a given computational effort (which, for fixed $m$, $n$,
and $M$, depends only on $N_{L}$ and $S$), we get better accuracy
in the region of the spectrum where data are usually more reliable
in practical applications. This is also interesting in the scenario
where one wishes to vary $N_{L}$ and $S$ for each $I$, which could
be easily considered in our theoretical framework, in order to obtain
an overall desired accuracy with the smallest possible computational
cost. For example, Figure~\ref{fig:error-bound} shows that the Fourier
sample length $N_{L}$ becomes less important as $I$ gets close to
zero, and could therefore be reduced in these regions, because there
is a ``natural oversampling'' due to the fact that the samples used
in the truncated sum get very close to each other in that region.
Our approximation makes use of this fact by lengthening the size of
the $W_{I}$ window support, thereby improving its decay in the Fourier
domain and consequently reducing the precision loss caused by truncation
of the infinite summation.

\begin{figure}
\centering{}\includegraphics[width=0.75\columnwidth]{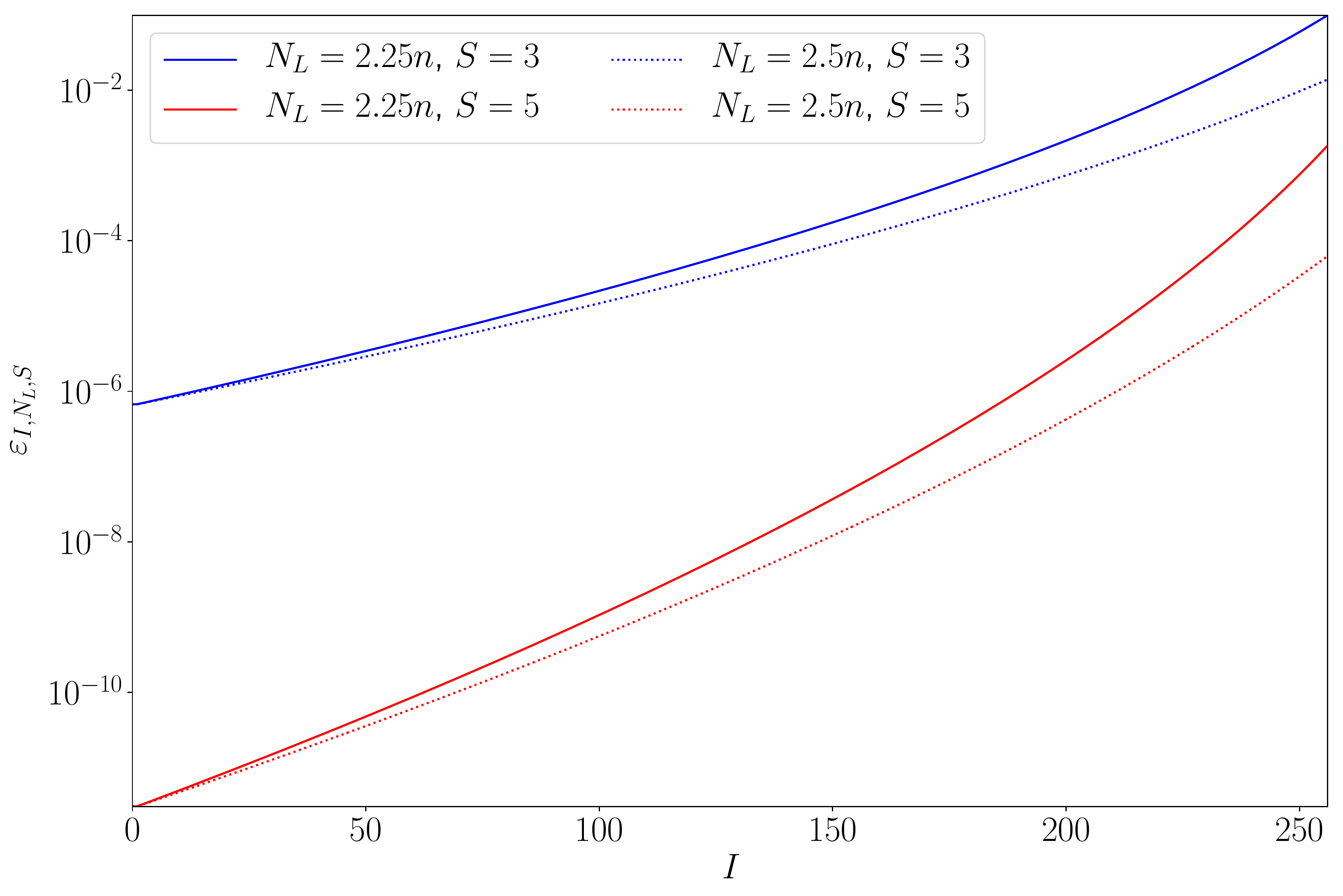}\caption{The error bound $\varepsilon_{I,N_{L},S}$ as a function of $I$ for
indicated $N_{L}$ and $S$ ($M=m=n=512$, $\sigma=\pi/M$).\label{fig:error-bound}}
\end{figure}

\section{Computational Algorithm Formulation\label{sec:Algorithm}}

At this point, the core of our strategy has been outlined. The first
step consists of the computation of the $X_{I,j}$ for $I\in\{-\nicefrac{M}{2}+1,-\nicefrac{M}{2}+2,\dots,\nicefrac{M}{2}\}$
and $j\in\{0,1,\dots,n-1\}$ using $n$ FFTs of length $M$, see \eqref{eq:XIj},
and then multiplying the results by the weights $\nicefrac{1}{W_{I}\left(t_{j}^{I}-\varpi_{I}\right)}$,
see \eqref{eq:nontruncated-sum}. After this multiplication, the values
\begin{equation}
Z_{I,J}:=\sum_{j=0}^{n-1}X_{I,j}\frac{1}{W_{I}\left(t_{j}^{I}-\varpi_{I}\right)}e^{-\imath j\frac{2\pi J}{N_{L}}\alpha_{I}}\label{eq:ZIJ}
\end{equation}
can be computed efficiently using $M$ CZTs, as we discuss below in
\ref{sub:CZT}. An important detail that should not be overlooked
is that, because $\eta$ can be very close to $\pm\nicefrac{N_{L}}{4}$,
we may need $S+1$ extra points at each end of the output of the CZT
for the right-hand side of~\eqref{eq:truncated-sum} to be computable,
and, thus, each CZT will have length $\nicefrac{N_{L}}{2}+2(S+1)$.
Finally, the values of the estimates of the DFT at the desired points
are computed by a linear combination of the previously-obtained values
of $Z_{I,J}$, as in \eqref{eq:truncated-sum}. In the appendices
we describe for completeness two well known key computational elements
that are used in our method. We leave the full description of our
algorithm to Subsection~\ref{sub:algo} and we discuss its computational
complexity in Subsection~\ref{sub:flops}.

\subsection{GALE: Approximating the DFT over a GALFD\label{sub:algo}}

Now we present GALE, the method that we propose for the computation
of the DFT over a GALFD (an open source reference implementation can
be downloaded from \url{https://bitbucket.org/eshneto/gale/downloads/}).
We give an algorithm for the region $\theta\in\left[\nicefrac{\pi}{4},\nicefrac{3\pi}{4}\right)$
and explain later how to make use of the same computations in order
to obtain the case $\theta\in\left[\nicefrac{3\pi}{4},\nicefrac{5\pi}{4}\right)$.
Note that GALE is adaptable to other domains as well.

Before starting the presentation, we discuss some typographical conventions
that are used in the description of the method. We denote non-scalar
entities such as vectors, two-dimensional, or three-dimensional arrays
of numbers by boldface letters, possibly with a subscript or with
a superscript. If $\boldsymbol{x}\in\mathbb{C}^{mn}$ is a two-dimensional
array with $m$ lines and $n$ columns, then $\boldsymbol{x}^{j}\in\mathbb{C}^{m}$
with $j\in\{0,1,\dots,n-1\}$ is the $j$-th column of $\boldsymbol{x}$,
that is, $\boldsymbol{x}^{j}=\left(x_{0,j},x_{1,j},\dots,x_{m-1,j}\right)^{T}$.
Likewise, $\boldsymbol{x}_{i}\in\mathbb{C}^{n}$ with $i\in\{0,1,\dots,m-1\}$
is the $i$-th row of $\boldsymbol{x}$ expressed as a column vector,
that is, $\boldsymbol{x}_{i}=\left(x_{i,0},x_{i,1},\dots,x_{i,n-1}\right)^{T}$.
Whenever there is a set of scalars indexed with subscripts and/or
superscripts, a corresponding boldface letter represents the collection
of those scalars. For example, $\boldsymbol{x}\in\mathbb{C}^{mn}$
is an entity that collects all scalars $x_{i,j}$. In another example,
the algorithm may set values of $v_{i}^{I,K}$ for certain ranges
of $I$, $K$, and $i$. In this case, $\boldsymbol{v}$ represents
the set of all $v_{i}^{I,K}$ that were defined during the algorithm's
execution.

\begin{algorithm}[t]
\caption{$\text{GALE-I}\left(\boldsymbol{\theta},m,n,M,N,N_{L},S,R_{1},R_{2},\sigma,\varepsilon\right)$}

\label{algo:GALDFT-INIT}

\normalsize{

\begin{algorithmic}[1]

\STATE{\textbf{for} $i\in\{0,1,\dots,m-1\}$ \textbf{do}\label{step:p-init-start}}

\STATE{$\quad\quad$$p_{i}\leftarrow e^{\imath i\left(\frac{2\pi R_{1}}{M}+\sigma\right)}$\label{step:p-init-finish}}

\STATE{\textbf{for} $I\in\{0,1,\dots,M-1\}$ \textbf{do}}

\STATE{$\quad\quad$$\alpha_{I}\leftarrow\frac{4\left(I-R_{1}\right)}{M}-\frac{2\sigma}{\pi}$\label{step:GALFD-alpha-compute}}

\STATE{$\quad\quad$$(\hat{\boldsymbol{q}}^{I},\boldsymbol{r}^{I},\boldsymbol{s}^{I})\leftarrow\text{CZT-I}\left(n,\alpha_{I},N_{L},\nicefrac{N_{L}}{2}+2(S+1),R_{2}\right)$\label{step:GALFD-CZT-init}}

\STATE{$\quad\quad$$\varpi_{I}\leftarrow\frac{\pi(n-1)}{N_{L}}\alpha_{I}$}

\STATE{$\quad\quad$$\tau_{I}\leftarrow\pi+\varepsilon\left(\pi-|\varpi_{I}|\right)$}

\STATE{$\quad\quad$\textbf{for} $j\in\{0,1,\dots,n-1\}$ \textbf{do}}

\STATE{$\quad\quad\quad\quad$$t_{j}^{I}\leftarrow\frac{2\pi j}{N_{L}}\alpha_{I}$}

\STATE{$\quad\quad\quad\quad$$r_{j}^{I}\leftarrow\frac{r_{j}^{I}}{K_{S\tau_{I},\tau_{I}}\left(t_{j}^{I}-\varpi_{I}\right)}$\label{step:GALFD-mod-r}}

\STATE{$\quad\quad$\textbf{for} $K\in\{0,1,\dots,N-1\}$ \textbf{do}}

\STATE{$\quad\quad\quad\quad$$\eta_{K}\leftarrow\frac{N_{L}\cot\theta_{K}}{4}$}

\STATE{$\quad\quad\quad\quad$$i\leftarrow0$}

\STATE{$\quad\quad\quad\quad$\textbf{for} $J\in\left\{ \ell:\left|\ell-\eta_{K}\right|\le S\right\} $
\textbf{do}}

\STATE{$\quad\quad\quad\quad\quad\quad$$v_{i}^{I,K}\leftarrow J+R_{2}$\label{step:GALFD-v-init}}

\STATE{$\quad\quad\quad\quad\quad\quad$$w_{i}^{I,K}\leftarrow\frac{\hat{K}_{S\tau_{I},\tau_{I}}\left(\eta_{K}-J\right)}{2\pi e^{\imath\left(\eta_{K}-J\right)\varpi_{I}}}$\label{step:GALFD-w-init}}

\STATE{$\quad\quad\quad\quad\quad\quad$$i\leftarrow i+1$}

\STATE{$\quad\quad\quad\quad$$u_{I,K}\leftarrow i$\label{step:GALFD-u-init}}

\STATE{\textbf{return} $(\boldsymbol{p},\hat{\boldsymbol{q}},\boldsymbol{r},\boldsymbol{s},\boldsymbol{u},\boldsymbol{v},\boldsymbol{w})$}

\end{algorithmic}}
\end{algorithm}

\begin{algorithm}[t]
\caption{$\text{GALE}\left(\boldsymbol{x},m,n,M,N,N_{L},S,\boldsymbol{p},\hat{\boldsymbol{q}},\boldsymbol{r},\boldsymbol{s},\boldsymbol{u},\boldsymbol{v},\boldsymbol{w}\right)$}

\label{algo:GALDFT-EXEC}

\normalsize{

\begin{algorithmic}[1]

\STATE{\textbf{for} $j\in\{0,1,\dots,n-1\}$ \textbf{do}\label{step:GALFD-FFT-start}}

\STATE{$\quad\quad$\textbf{for} $i\in\{0,1,\dots,m-1\}$ \textbf{do}\label{step:GALFD-first-mul-start}}

\STATE{$\quad\quad\quad\quad$$x_{i,j}\leftarrow x_{i,j}p_{i}$\label{step:GALFD-first-mult}}

\STATE{$\quad\quad$$\boldsymbol{U}^{j}\leftarrow\text{FFT}\left(\boldsymbol{x}^{j},m,M\right)$\label{step:GALFD-FFT-finish}}

\STATE{\textbf{for} $I\in\{0,1,\dots,M-1\}$ \textbf{do}}

\STATE{$\quad\quad$$\boldsymbol{V}_{I}\leftarrow\text{CZT}\left(\boldsymbol{U}_{I},n,\nicefrac{N_{L}}{2}+2(S+1),\hat{\boldsymbol{q}}^{I},\boldsymbol{r}^{I},\boldsymbol{s}^{I}\right)$\label{step:GALFD-CZT-compute}}

\STATE{$\quad\quad$\textbf{for} $K\in\{0,1,\dots,N-1\}$ \textbf{do}\label{step:GALFD-final-sum-supstart}}

\STATE{$\quad\quad\quad\quad$$Y_{I,K}\leftarrow0$\label{step:GALFD-final-sum-start}}

\STATE{$\quad\quad\quad\quad$\textbf{for} $i\in\left\{ 0,1,\dots,u_{I,K}-1\right\} $
\textbf{do}}

\STATE{$\quad\quad\quad\quad\quad\quad$$J\leftarrow v_{i}^{I,K}$}

\STATE{$\quad\quad\quad\quad\quad\quad$$Y_{I,K}\leftarrow Y_{I,K}+w_{i}^{I,K}V_{I,J}$\label{step:GALFD-final-sum-finish}}

\STATE{\textbf{return} $(\boldsymbol{Y})$}

\end{algorithmic}}
\end{algorithm}

Now we discuss the specific algorithm GALE that we are introducing
in the present paper. Algorithm~\ref{algo:GALDFT-INIT} is the initialization
phase, which is run only once for a combination of parameters $\boldsymbol{\theta}\in[\nicefrac{\pi}{4},\nicefrac{3\pi}{4})^{N}$,
$m$, $n$, $M$, $N$, $N_{L}$, $S$, $R_{1}$, $R_{2}$, $\sigma$,
and $\varepsilon$. Algorithm~\ref{algo:GALDFT-EXEC} uses the output
of Algorithm~\ref{algo:GALDFT-INIT} in order to compute an approximation
to the DFT over the $\mathfrak{R}_{M,\sigma}(\theta_{K})$ of \eqref{eq:LRFS-def}
for $K\in\{0,1,\dots,N-1\}$. Parameters $N_{L}$, $S$, and $\varepsilon$
influence the quality of the approximation. Parameters $m$, $n$,
$M$, and $N$ are related to the dimensions of the image and the
domain. Parameter $R_{1}$ determines the range $I\in\left\{ -R_{1},-R_{1}+1,\dots,M-R_{1}-1\right\} $
of the domain in order to match the definition~\eqref{eq:LRFS-def},
and parameter $R_{2}$ determines the range $J\in\left\{ -R_{2},-R_{2}+1,\dots,\nicefrac{N_{L}}{2}+2(S+1)-R_{2}-1\right\} $
of elements $Z_{I,J}$ from~\eqref{eq:ZIJ} that are going to be
computed to be used in the truncated sum~\eqref{eq:truncated-sum}.

Assuming that we have run the initialization as
\[
(\boldsymbol{p},\hat{\boldsymbol{q}},\boldsymbol{r},\boldsymbol{s},\boldsymbol{u},\boldsymbol{v},\boldsymbol{w})\leftarrow\text{GALE-I}\left(\boldsymbol{\theta},m,n,M,N,N_{L},S,R_{1},R_{2},\sigma,\varepsilon\right)
\]
with $R_{1}=\nicefrac{M}{2}-1$ and $R_{2}=\nicefrac{N_{L}}{4}+S+1$,
we analyze the output of
\[
(\boldsymbol{Y})\leftarrow\text{GALE}\left(\boldsymbol{x},m,n,M,N,N_{L},S,\boldsymbol{p},\hat{\boldsymbol{q}},\boldsymbol{r},\boldsymbol{s},\boldsymbol{u},\boldsymbol{v},\boldsymbol{w}\right).
\]
First notice that, because of the way $\boldsymbol{p}$ is initialized
in Steps~\ref{step:p-init-start}-\ref{step:p-init-finish} of Algorithm~\ref{algo:GALDFT-INIT},
Steps~\ref{step:GALFD-FFT-start}-\ref{step:GALFD-FFT-finish} of
Algorithm~\ref{algo:GALDFT-EXEC} store in the array $\boldsymbol{U}\in\mathbb{C}^{Mn}$
the elements $U_{I,J}=X_{I-R_{1},j}$, with $X_{I-R_{1},j}$ as defined
in~\eqref{eq:XIj}, for $I\in\{0,1,\dots,M-1\}$ and $j\in\{0,1,\dots,n-1\}$.
Fix an $I\in\{0,1,\dots,M-1\}$ and notice that Step~\ref{step:GALFD-mod-r}
of Algorithm~\ref{algo:GALDFT-INIT} modifies the $\boldsymbol{r}^{I}$
that was previously obtained at Step~\ref{step:GALFD-CZT-init} of
Algorithm~\ref{algo:GALDFT-INIT} in a way such that Step~\ref{step:GALFD-CZT-compute}
of Algorithm~\ref{algo:GALDFT-EXEC} computes $V_{I,J}=Z_{I-R_{1},J-R_{2}}$,
with $Z_{I-R_{1},J-R_{2}}$ as defined in~\eqref{eq:ZIJ}, for $J\in\left\{ 0,1,\dots,\nicefrac{N_{L}}{2}+2(S+1)\right\} $.
Fix a $K\in\{0,1,\dots,N-1\}$ and notice that the $u_{I,K}$ recorded
in Step~\ref{step:GALFD-u-init} of Algorithm~\ref{algo:GALDFT-INIT}
stores the number of terms in 
\[
\sum_{\left|J-\eta_{K}\right|\leq S}\frac{\hat{W}_{I}\left(\eta_{K}-J\right)}{2\pi e^{\imath\left(\eta_{K}-J\right)\varpi_{I}}}Z_{I,J}
\]
of the approximation~\eqref{eq:truncated-sum}. Furthermore, according
to Step~\ref{step:GALFD-v-init} of Algorithm~\ref{algo:GALDFT-INIT},
the set $\left\{ v_{0}^{I,K},\dots,v_{u_{I,K}-1}^{I,K}\right\} $
equals the set $\left\{ J+R_{2}:\left|J-\eta_{K}\right|\leq S\right\} $
of the indices that go into the above summation, while Step~\ref{step:GALFD-w-init}
of Algorithm~\ref{algo:GALDFT-INIT} ensures that $w_{i}^{I,K}=\frac{\hat{W}_{I}\left(\eta_{K}-J_{i}\right)}{2\pi e^{\imath\left(\eta_{K}-J_{i}\right)\varpi_{I}}}$,
where $J_{i}:=v_{i}^{I,K}-R_{2}$. Finally, Steps~\ref{step:GALFD-final-sum-start}-\ref{step:GALFD-final-sum-finish}
of Algorithm~\ref{algo:GALDFT-EXEC} make use of these pre-computed
values in order to compute
\begin{eqnarray*}
Y_{I,K} & {}=\sum_{i=0}^{u_{I,K}-1}w_{i}^{I,K}V_{I,v_{i}^{I,K}}=\sum_{|J-\eta_{K}|\le S}\frac{\hat{W}_{I}\left(\eta_{K}-J\right)}{2\pi e^{\imath\left(\eta_{K}-J\right)\varpi_{I}}}V_{I,J+R_{2}}\\
 & {}=\sum_{|J-\eta_{K}|\le S}\frac{\hat{W}_{I}\left(\eta_{K}-J\right)}{2\pi e^{\imath\left(\eta_{K}-J\right))\varpi_{I}}}Z_{I-R_{1},J}.
\end{eqnarray*}
This shows that we have $Y_{I,K}=D_{I-R_{1},N_{L},S}(\theta_{K})$,
where $D_{I-R_{1},N_{L},S}(\theta_{K})$ is as defined in~\eqref{eq:truncated-sum}.
That is, each column $\boldsymbol{Y}^{K}$ of $\boldsymbol{Y}$ returned
by Algorithm~\ref{algo:GALDFT-EXEC} contains the approximations
for $\mathcal{D}[\boldsymbol{x}]\biggl(\left(\frac{2\pi I}{M}-\sigma\right)\cot\theta_{K},\frac{2\pi I}{M}-\sigma\biggr)$
with $I\in\{-\nicefrac{M}{2}+1,-\nicefrac{M}{2}+2,\dots,\nicefrac{M}{2}\}$.

Now, let us consider the case where we have $\boldsymbol{\theta}\in[\nicefrac{3\pi}{4},\nicefrac{5\pi}{4})^{\tilde{N}}$
and we wish to compute $\mathcal{D}[\boldsymbol{x}]\biggl(\frac{2\pi I}{M}+\sigma,\left(\frac{2\pi I}{M}+\sigma\right)\tan\theta_{K}\biggr)$
for $\boldsymbol{x}\in\mathbb{R}^{mn}$ , $K\in\{0,1,\dots,\tilde{N}-1\}$,
and $I\in\left\{ \nicefrac{M}{2},-\nicefrac{M}{2}+2,\right.$ $\left.\dots,\nicefrac{M}{2}-1\right\} $.
In this case, denote $\tilde{\sigma}:=-\sigma$, let $\tilde{\boldsymbol{x}}\in\mathbb{C}^{nm}$
be such that $\tilde{x}_{i,j}=x_{j,i}$ and let $\tilde{\boldsymbol{\theta}}\in[\nicefrac{\pi}{4},\nicefrac{3\pi}{4})^{\tilde{N}}$
be given componentwise by $\tilde{\theta}_{K}=\Lambda\left(-\theta_{K}-\nicefrac{\pi}{2}\right)$.
Notice that $\tan\theta_{K}=\cot\tilde{\theta}_{K}$. Then, let us
proceed from the definition, denoting $\tilde{n}=m$ and $\tilde{m}=n$,
\begin{eqnarray*}
\fl\mathcal{D}[\boldsymbol{x}]\left(\frac{2\pi I}{M}+\sigma,\left(\frac{2\pi I}{M}+\sigma\right)\tan\theta_{K}\right) & {}=\sum_{i=0}^{m-1}\sum_{j=0}^{n-1}x_{i,j}e^{-\imath\left(j\left(\frac{2\pi I}{M}+\sigma\right)+i\left(\frac{2\pi I}{M}+\sigma\right)\tan\theta_{K}\right)}\\
 & {}=\sum_{\tilde{i}=0}^{\tilde{m}-1}\sum_{\tilde{j}=0}^{\tilde{n}-1}\tilde{x}_{\tilde{i},\tilde{j}}e^{-\imath\left(\tilde{j}\left(\frac{2\pi I}{M}-\tilde{\sigma}\right)\cot\tilde{\theta}_{K}+\tilde{i}\left(\frac{2\pi I}{M}-\tilde{\sigma}\right)\right)}\\
 & {}=\mathcal{D}[\tilde{\boldsymbol{x}}]\biggl(\left(\frac{2\pi I}{M}-\tilde{\sigma}\right)\cot\theta_{K},\frac{2\pi I}{M}-\tilde{\sigma}\biggr).
\end{eqnarray*}
Thus, we can compute the desired values by the call
\[
\left(\tilde{\boldsymbol{p}},\tilde{\hat{\boldsymbol{q}}},\tilde{\boldsymbol{r}},\tilde{\boldsymbol{s}},\tilde{\boldsymbol{u}},\tilde{\boldsymbol{v}},\tilde{\boldsymbol{w}}\right)\leftarrow\text{GALE-I}\left(\tilde{\boldsymbol{\theta}},\tilde{m},\tilde{n},M,\tilde{N},N_{L},S,\tilde{R}_{1},\tilde{R}_{2},\tilde{\sigma},\epsilon\right)
\]
with $\tilde{R}_{1}=\nicefrac{M}{2}$ and $\tilde{R}_{2}=\nicefrac{N_{L}}{4}+S$,
followed by the call
\[
\left(\tilde{\boldsymbol{Y}}\right)\leftarrow\text{GALE}\left(\tilde{\boldsymbol{x}},\tilde{m},\tilde{n},M,\tilde{N},N_{L},S,\tilde{\boldsymbol{p}},\tilde{\hat{\boldsymbol{q}}},\tilde{\boldsymbol{r}},\tilde{\boldsymbol{s}},\tilde{\boldsymbol{u}},\tilde{\boldsymbol{v}},\tilde{\boldsymbol{w}}\right),
\]
after which we have the approximations for $\mathcal{D}[\boldsymbol{x}]\biggl(\frac{2\pi I}{M}+\sigma,\left(\frac{2\pi I}{M}+\sigma\right)\tan\theta_{K}\biggr)$
for the indices $\left\{ I\in-\nicefrac{M}{2},-\nicefrac{M}{2}+1,\dots,\nicefrac{M}{2}-1\right\} $
stored in column $\tilde{\boldsymbol{Y}}^{K}$ of $\tilde{\boldsymbol{Y}}$.

We end the present subsection discussing the computation of the adjoint
operator of the approximate DFT over a GALFD that is computed by Algorithm~\ref{algo:GALDFT-EXEC}.
Notice that the algorithm can be expressed as $\boldsymbol{Y}=\mathrm{G}\boldsymbol{x}$
with $\mathrm{G}=\mathrm{L}\mathrm{C}\mathrm{F}\mathrm{D}$, where
$\mathrm{D}:\mathbb{C}^{mn}\to\mathbb{C}^{mn}$ is a diagonal linear
operator that represents Steps~\ref{step:GALFD-first-mul-start}-\ref{step:GALFD-first-mult}
of Algorithm~\ref{algo:GALDFT-EXEC} for all $j\in\{0,1,\dots,n-1\}$,
$\mathrm{F}:\mathbb{C}^{mn}\to\mathbb{C}^{Mn}$ is a linear operator
that represents Step~\ref{step:GALFD-FFT-finish} of Algorithm~\ref{algo:GALDFT-EXEC}
for all $j\in\{0,1,\dots,n-1\}$, $\mathrm{C}:\mathbb{C}^{Mn}\to\mathbb{C}^{MP}$,
where $P=\nicefrac{N_{L}}{2}+2(S+1)$, is a linear operator that represents
Step~\ref{step:GALFD-CZT-compute} of Algorithm~\ref{algo:GALDFT-EXEC}
for all $I\in\{0,1,\dots,M-1\}$ and $\mathrm{L}:\mathbb{C}^{MP}\to\mathbb{C}^{MK}$
is a linear operator that represents Steps~\ref{step:GALFD-final-sum-supstart}-\ref{step:GALFD-final-sum-finish}
of Algorithm~\ref{algo:GALDFT-EXEC}. Thus we have $\mathrm{G}^{*}=\mathrm{D^{*}}\mathrm{F}^{*}\mathrm{C}^{*}\mathrm{L}^{*}$,
where all involved adjoint operators can be computed straightforwardly,
except for $\mathrm{C}^{*}$, which we discuss soon. It is useful
to note that all of the linear operators in the adjoint computation
use the same data-independent coefficients computed by the initialization
routines of the forward operator, eliminating the need to redo any
computation when the adjoint is required, which is the case in most
applications.

To compute $\mathrm{C}^{*}$, we notice that $\mathrm{C}$ can be
written as a block-diagonal operator of the form
\[
\mathrm{C}=\left(\begin{array}{cccc}
\mathrm{B}_{0}\\
 & \mathrm{B}_{1}\\
 &  & \ddots\\
 &  &  & \mathrm{B}_{M-1}
\end{array}\right)
\]
where the $\mathrm{B}_{I}:\mathbb{C}^{n}\to\mathbb{C}^{P}$ are given
by $\mathrm{B}=\mathrm{D}_{3}^{I}\mathrm{H}_{2}\mathrm{D}_{2}^{I}\mathrm{H}_{1}\mathrm{D}_{1}^{I}$.
In this expression, the $\mathrm{D}_{1}^{I}:\mathbb{C}^{n}\to\mathbb{C}^{n}$
are diagonal linear operators that correspond to the multiplication
in Steps~\ref{step:CZT_firstfor}-\ref{step:CZT_firstfor_end} of
Algorithm~\ref{algo:CZT}, $\mathrm{H}_{1}:\mathbb{C}^{n}\to\mathbb{C}^{2P}$
is a linear operator that corresponds to the zero-padded FFT performed
in Step~\ref{step:CZT-dftp_p} of Algorithm~\ref{algo:CZT}, the
$\mathrm{D}_{2}^{I}:\mathbb{C}^{2P}\to\mathbb{C}^{2P}$ are diagonal
linear operators that correspond to the multiplication in Steps~\ref{step:CZT-conv_start}-\ref{step:CZT-second-for-finish}
of Algorithm~\ref{algo:CZT}, $\mathrm{H}_{2}:\mathbb{C}^{2P}\to\mathbb{C}^{P}$
is a linear operator that corresponds to the truncated IFFT performed
in Step~\ref{step:CZT-conv_finish} of Algorithm~\ref{algo:CZT},
and the $\mathrm{D}_{3}^{I}:\mathbb{C}^{P}\to\mathbb{C}^{P}$ are
diagonal linear operators that correspond to the multiplication in
Steps~\ref{step:CZT-final-start}-\ref{step:CZT-final-finish} of
Algorithm~\ref{algo:CZT}. The adjoints of each of the operators
$\mathrm{D}_{3}^{I}$, $\mathrm{H}_{2}$, $\mathrm{D}_{2}^{I}$, $\mathrm{H}_{1}$,
and $\mathrm{D}_{1}^{I}$ are easy to implement and we have
\[
\mathrm{C^{*}}=\left(\begin{array}{cccc}
\mathrm{B}_{0}^{*}\\
 & \mathrm{B}_{1}^{*}\\
 &  & \ddots\\
 &  &  & \mathrm{B}_{M-1}^{*}
\end{array}\right),
\]
where $(\mathrm{B}^{I})^{*}=(\mathrm{D}_{1}^{I})^{*}\mathrm{H}_{1}^{*}(\mathrm{D}_{2}^{I})^{*}\mathrm{H}_{2}^{*}(\mathrm{D}_{3}^{I})^{*}$.

\subsection{Memory Requirements and Floating Point Operations Count\label{sub:flops}}

In the present subsection we discuss memory requirements for the operation
of the proposed method and the asymptotic number of flops required
for its execution. We do the estimations for Algorithm~\ref{algo:GALDFT-EXEC}
for the case $\boldsymbol{\theta}\in[\nicefrac{\pi}{4},\nicefrac{3\pi}{4})^{N}$;
for the full range of rays, the actual memory and flops will be roughly
the double.

We start with a discussion about the memory requirements. Algorithm~\ref{algo:GALDFT-EXEC}
requires storage of the precomputed arrays $\boldsymbol{p}\in\mathbb{C}^{m}$,
$\hat{\boldsymbol{q}}\in\mathbb{C}^{2PM}$, where $P=\nicefrac{N_{L}}{2}+2(S+1)$,
$\boldsymbol{r}\in\mathbb{C}^{Mn}$, $\boldsymbol{s}\in\mathbb{C}^{MP}$,
$\boldsymbol{u}\in\mathbb{N}^{MN}$, $\boldsymbol{v}\in\mathbb{N}^{(2S+1)MN}$,
and $\boldsymbol{w}\in\mathbb{C}^{(2S+1)MN}$. It is worth pointing
out that the sizes of $\boldsymbol{u}$ and $\boldsymbol{v}$ can
be reduced to $\boldsymbol{u}\in\mathbb{N}^{N}$ and $\boldsymbol{v}\in\mathbb{N}^{(2S+1)N}$
because the indices are the same for every point of a ray $\mathfrak{R}_{M,\sigma}(\theta_{K})$,
which can be seen from the geometry of the domain. Therefore, if we
assume that the number $N$ of angles in the domain is approximately
the same as the number of rows/columns of the working images, and
because $M\ge m$, then we have to store $2S+1$ arrays of roughly
the size of the images input to the algorithm. This implies that increasing
the number of summation terms has a large impact on storage requirements,
which may be important to consider in memory-constrained computing
environments. Other than these stored precomputed data, upon execution,
the method requires two arrays $\boldsymbol{U}\in\mathbb{C}^{Mn}$
and $\boldsymbol{V}\in\mathbb{C}^{MP}$, but this extra storage can
be shared if some consideration is put in the implementation and the
amount of memory taken by these arrays is likely to be much smaller
than the one used by the linear combination coefficients $\boldsymbol{w}$.
This discussion implies that storage requirement for the method to
run is $O(SMN)$, where $S\ll m$, $M\ge m$, and $N\approx n$.

Now, we discuss the number of flops necessary for to execute Algorithm~\ref{algo:GALDFT-EXEC}.
The initialization of the precomputed coefficients will be ignored,
under the assumption that its cost is going to be diluted across many
executions of the algorithm itself. We assume that an FFT/IFFT of
length $n$ takes less than $C_{\text{FFT}}n\log_{2}n$ flops to complete,
where $n$ is the FFT/IFFT length and $C_{\text{FFT}}$ is a positive
constant. Also, we recall that complex multiplications take $6$ flops
($4$ real products and $2$ real additions) each and complex additions
use $2$ flops each. Then, notice that Steps~\ref{step:GALFD-FFT-start}-\ref{step:GALFD-FFT-finish}
of Algorithm~\ref{algo:GALDFT-EXEC} perform $6mn+nC_{\text{FFT}}M\log_{2}M$
flops. Without going into the details, the reader can realize that
each call to CZT at Step~\ref{step:GALFD-CZT-compute} of Algorithm~\ref{algo:GALDFT-EXEC}
takes $6(n+3P)+4C_{\text{FFT}}P\left(\log_{2}P+1\right)$ flops, totaling
$6M(n+3P)+4MC_{\text{FFT}}P\left(\log_{2}P+1\right)$ flops for all
the calls. It is easy to notice that $u_{I,K}\le2S+1$, therefore,
the computational effort in Steps~\ref{step:GALFD-final-sum-start}-\ref{step:GALFD-final-sum-finish}
of Algorithm~\ref{algo:GALDFT-EXEC} is less than or equal to $8N(2S+1)$,
and these steps are repeated $M$ times. The total number of flops
for Algorithm~\ref{algo:GALDFT-EXEC} therefore amounts to no more
than
\begin{eqnarray*}
6(mn+Mn+3PM)+8M & N(2S+1)\\
 & +MC_{\text{FFT}}\left[n\log_{2}n+4P\left(\log{}_{2}P+1\right)\right].
\end{eqnarray*}
Assuming, as is often the case, that $n\approx m$, $M\approx m$,
and $N\approx m$, then this simplifies to
\begin{equation}
O(m^{2}S)+O\left(mP\log_{2}P\right).\label{eq:GALAS-flops}
\end{equation}

\section{\label{sec:Numerical-Experimentation}Numerical Experimentation}

In the present section, we perform numerical experimentation intended
to provide evidence of how the algorithm works in practice. We split
the experiments in two parts. The first of these parts is dedicated
to give a feeling of how accurate and fast the method is when compared
with what we believe to be the most competitive existing alternatives
for the same abstract mathematical task. Then we use the technique
in MRI reconstruction, in order to display its practical capabilities
for real-world tasks.

\subsection{Experimentation with Simulated Data\label{sub:simulated}}

We now present experiments comparing the accuracy of GALE with alternatives
also capable of tackling the approximate computation of the DFT over
a GALFD. The results obtained by the alternative methods and those
obtained by GALE will be compared to results obtained by direct computation
of the DFT over the GALFD through the definition~\eqref{eq:DSFT_def},
in order to assess the accuracy of the approaches.

The methods we try against GALE are two: the Non-uniform Fast Fourier
Transform (NFFT)~\cite{kkp09}, which has an available implementation
in the C language in the form of a widely distributed library called
NFFT3, and the Non-Uniform Fast Fourier Transform (NUFFT)~\cite{fes03}.
Computational times are not directly comparable in the NUFFT case,
because the implementations of the GALE and the NFFT are in C/C++
while the NUFFT is entirely implemented in MATLAB (R2017b). The NFFT
algorithm bears similarity with GALE in that the approximation is
based on Theorem~\ref{thm:Fourmont}. However, in the NFFT approach,
the approximate summation is based on the DFT over a CFD, instead
of on the DFT over an LFD, as we propose. The NFFT computes the DTFT
samples through the following formula:
\[
\mathcal{D}[\boldsymbol{x}](\xi,\upsilon)\approx\sum_{\left|I-\eta_{\upsilon}\right|\leq S}\omega_{\xi,\upsilon,I}\sum_{\left|J-\eta_{\xi}\right|\leq S}\gamma_{\xi,\upsilon,J}\sum_{i=0}^{m-1}\overline{\omega}_{i}e^{-\imath i\frac{2\pi I}{P}}\sum_{j=0}^{n-1}\overline{\gamma}_{j}x_{i,j}e^{-\imath j\frac{2\pi J}{P}},
\]
where the constants $\omega_{\xi,\upsilon,I}$, $\gamma_{\xi,\upsilon,I}$,
$\overline{\omega}_{i}$, $\overline{\gamma}_{j}$, $\eta_{\upsilon}$,
and $\eta_{\xi}$ are determined as required by Theorem~\ref{thm:Fourmont}
to give an approximation to the actual DTFT values we intend to compute.
The exact values of these constants are not important for our discussion;
only the impact of the parameters $P$ and $S$, which control accuracy
and determine computational cost, matter in what follows. Both parameters
have similar meanings to their counterparts in GALE: $P$ determines
the Fourier sampling rate, while $S$ dictates the length of the truncation
of the infinite sum in Theorem~\ref{thm:Fourmont}.

Because it is based on the DFT over a CFD, the NFFT algorithm requires
oversampling and approximations in both the vertical and horizontal
directions, a fact that translates to requiring a truncated summation
with up to $(2S+1)^{2}$ terms, whereas GALE needs at most $2S+1$
terms to be summed. The NUFFT method is similar to the NFFT, and the
important point to observe here is that it too requires $O\left(S^{2}\right)$
terms in its truncated sum. On the other hand, GALE requires longer
FFTs because of the CZT computations. These longer FFTs may offer
better accuracy because of the ``natural oversampling'' phenomenon,
but this is only true within the central regions of the domain. Under
the assumptions that $n\approx m$, $M\approx m$, and $N\approx m$,
the flops count for the NFFT algorithm is
\begin{equation}
O\left(m^{2}S^{2}\right)+O\left(P^{2}\log_{2}P\right).\label{eq:NFFT-flops}
\end{equation}
This formula compares unfavorably to~\eqref{eq:GALAS-flops}, but
it is \emph{a priori} unclear which method is advantageous when the
accuracy versus computation time trade-off is taken into account.

We computed the actual values of 
\begin{equation}
\mathbb{C}^{512\cdot400}\ni\boldsymbol{y}:=\mathrm{D}_{\mathfrak{N}_{512,400}}\boldsymbol{x}^{\dagger},\label{eq:y_def}
\end{equation}
where $\boldsymbol{x}^{\dagger}\in\mathbb{R}^{512\cdot512}$ is the
synthetic image shown in Figure~\ref{fig:herman-phantom}, using~\eqref{eq:DSFT_def}
in order to have a reference for the DFT over the GALFD that incurs
no errors other than those caused by finite arithmetic precision.
After that, we computed approximations for $\mathrm{D}_{\mathfrak{N}_{512,400}}\boldsymbol{x}$
using the NFFT, the NUFFT, and the GALE with several different parameters.
The truncation parameter took values $S\in\{2,4,6,8\}$ and the Fourier
sampling rate parameter took values $P\in\{768,1024,1280\}$. Notice
that for GALE we have followed the convention $P=\nicefrac{N_{L}}{2}+2(S+1)$,
i.e., once the values for $P$ and $S$ have been fixed, GALE was
used with $N_{L}=2P-4(S+1)$. This approach was adopted in order to
make the length of the CZTs computed during the execution of GALE
equal to $P$.

\begin{figure}
\begin{centering}
{%
   \setlength{\fboxsep}{0pt}%
   \setlength{\fboxrule}{0.1pt}%
   \fbox{\includegraphics[width=0.5\columnwidth]{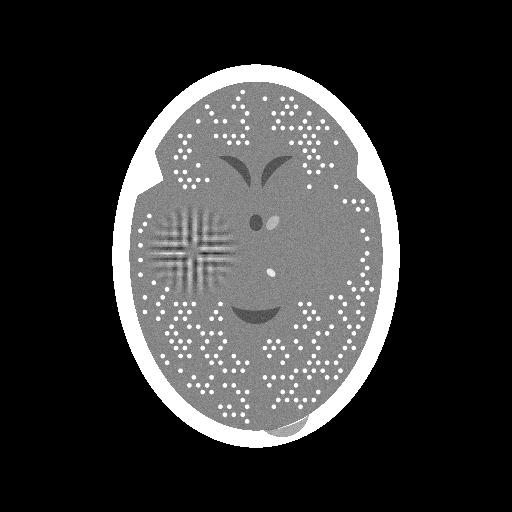}}%
}
\par\end{centering}

\caption{Mathematical phantom used in the simulated experiments. Image range
is $[0,1]$.\label{fig:herman-phantom}}
\end{figure}

A parallel version of GALE, called GALEP, was also tested to see how
the algorithm scales in multiprocessing environments. All algorithms
were executed in a quad core i7 7700HQ processor with 32GB of RAM.
GALEP used four threads of processing. GALE and GALEP were implemented
in C/C++ with the FFTW3 library~\cite{frj05} used for FFT computations.

For measures of approximation quality, we used the following: Given
a $\boldsymbol{y}\in\mathbb{C}^{MN}$ such that $y_{I,J}\ne0$ for
all $(I,J)\in\{0,1,\dots,M-1\}\times\{0,1,\dots,N-1\}$ and an approximation
$\overline{\boldsymbol{y}}$ to $\boldsymbol{y}$,
\begin{enumerate}
\item the Mean Relative Error (MRE) is defined as
\[
\text{MRE}\left(\boldsymbol{y},\overline{\boldsymbol{y}}\right):=\frac{1}{MN}\sum_{I=0}^{M-1}\sum_{J=0}^{N-1}\frac{\left|y_{I,J}-\overline{y}_{I,J}\right|}{\left|y_{I,J}\right|};
\]

\item the Relative Squared Error (RSE) is defined as 
\[
RSE\left(\boldsymbol{y},\overline{\boldsymbol{y}}\right):=\frac{\sum_{I=0}^{M-1}\sum_{J=0}^{N-1}\left|y_{I,J}-\overline{y}_{I,J}\right|^{2}}{\sum_{I=0}^{M-1}\sum_{J=0}^{N-1}\left|y_{I,J}\right|^{2}}.
\]

\end{enumerate}
\begin{figure}
\begin{centering}
\includegraphics[width=0.5\columnwidth]{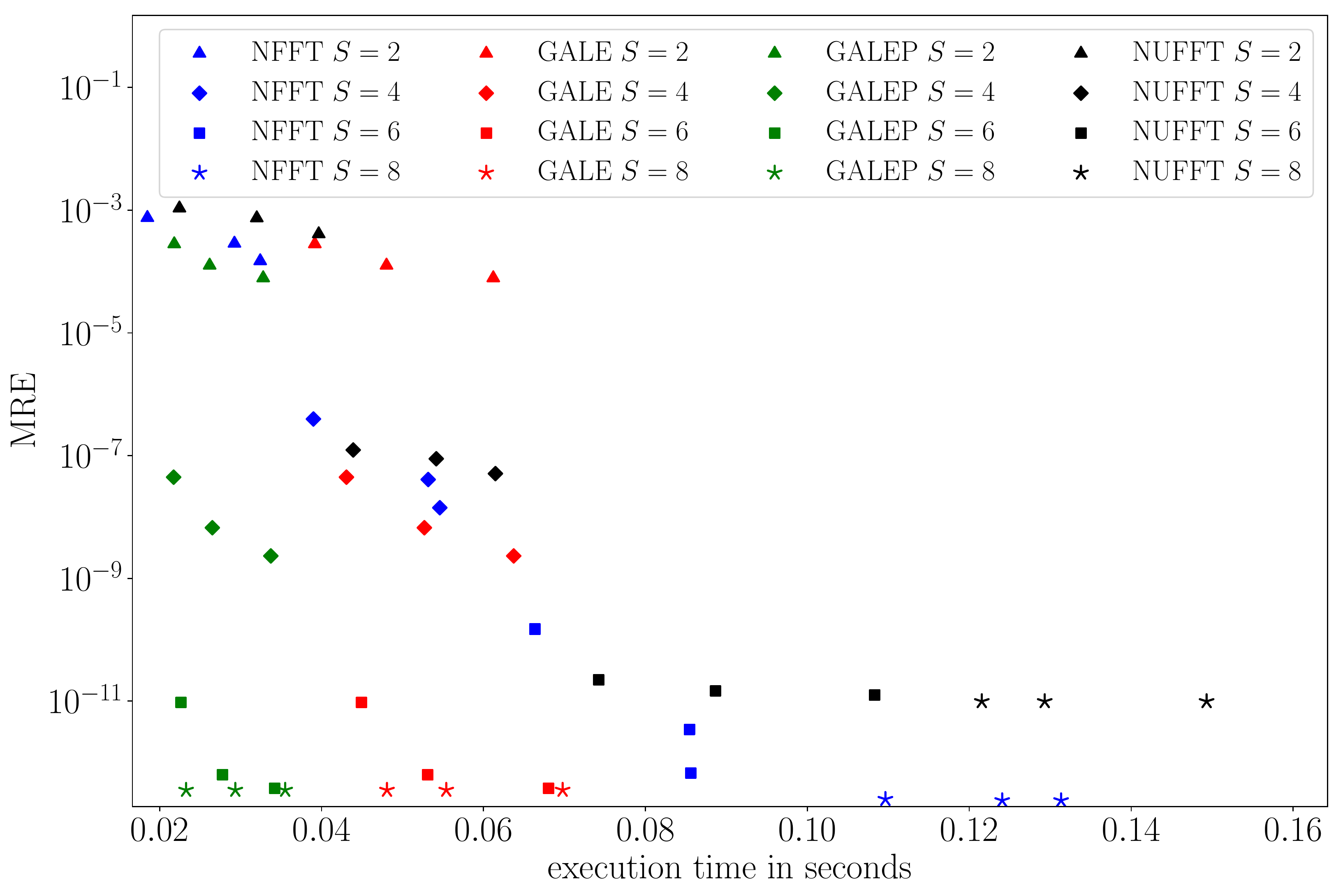}\includegraphics[width=0.5\columnwidth]{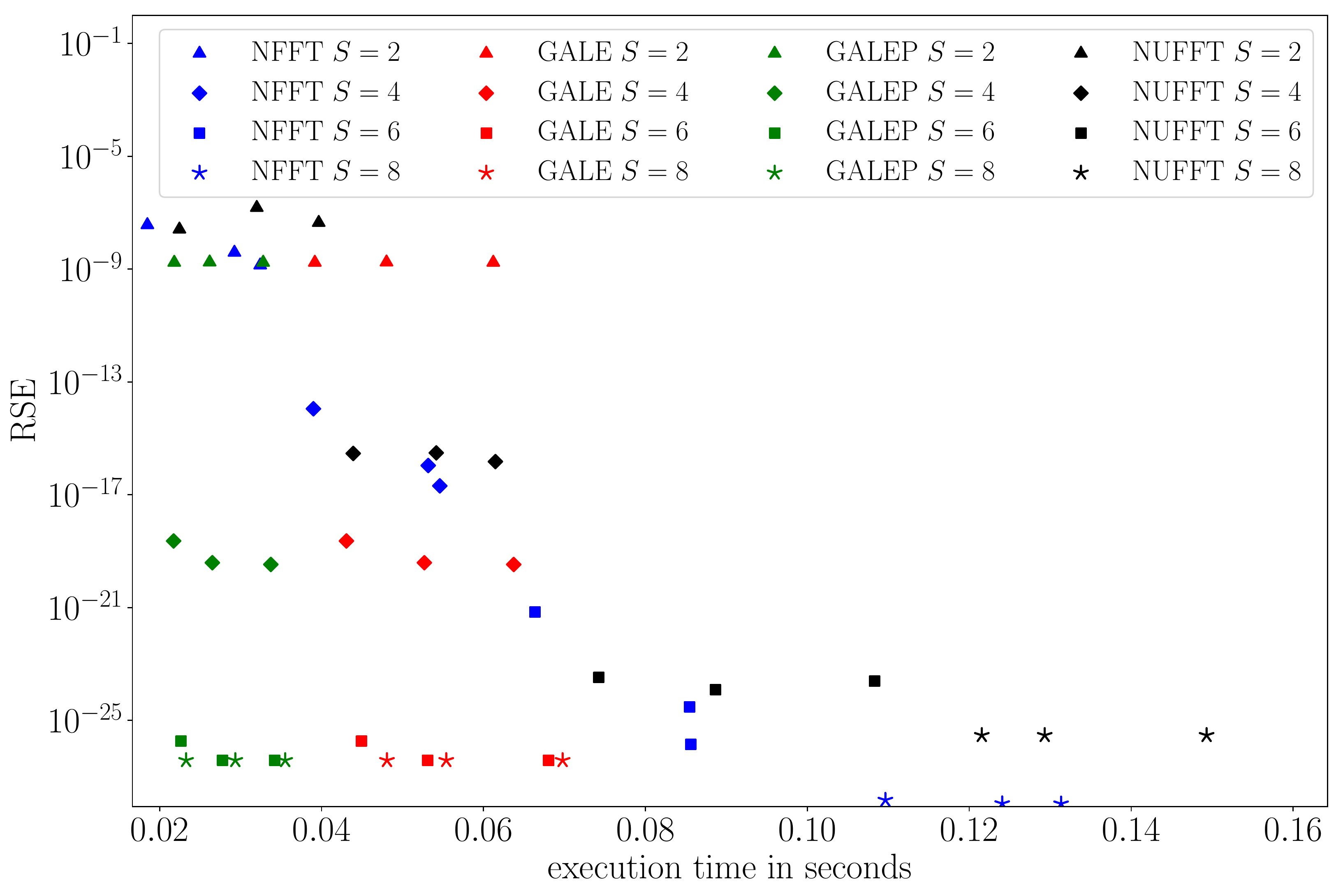}
\par\end{centering}

\caption{MREs (left) and RSEs (right) between the DFT over the GALFD computed
using~\eqref{eq:DSFT_def} and the approximation obtained by various
schemes. For a given algorithm and for a given value of the parameter
$S$, the marker for the points representing results with different
values of $P$ was kept the same, but these can be differentiated
by its computation time: the larger the $P\in\{768,1024,1280\}$,
the longer a given algorithm with a given value of $S$ takes to run
to completion.\label{fig:errors_times}}
\end{figure}

The results are shown in Figure~\ref{fig:errors_times}, where an
advantage of the GALE is seen over the NFFT and the NUFFT. It is clear
that if an MRE below $10^{-7}$ desired, then, within a same amount
of computation time, it is possible to obtain a better approximation
for the DFT over the GALFD with GALE than with the NFFT or of the
NUFFT. This is so because the $O\left(m^{2}S\right)$ term in the
flops count for GALE becomes significantly smaller than the $O\left(m^{2}S^{2}\right)$
term in the NFFT's (and NUFFT's) flops count. In fact, the $O\left(m^{2}S^{2}\right)$
term dominates~\eqref{eq:NFFT-flops} because increasing $S$ has
a larger impact on NFFT's running time than increasing $P$, whereas
the opposite is true for GALE.

Notice that the relative gains are significant. For example, for an
RSE of around $10^{-26}$, GALE takes half of the time of NFFT to
complete the computations. Furthermore, although the computation times
seem small, in MRI applications one usually deals with multiple channel
data, volumetric images, and several different contrasts in, e.g.,
quantitative imaging. This means that computation times add up and
therefore efficient numerical techniques for the computation of the
DTFT over radial domains can have important practical consequences,
especially when iterative reconstruction methods, as discussed in
\ref{sec:Iterative-Algorithms}, are used to recover the $\boldsymbol{x}$
in \eqref{eq:DSFT_def} from the observed data, indicated by the left-hand-side
of \eqref{eq:DSFT_def}.

We also notice that GALE scales well in multiprocessing environments.
A parallel CPU implementation of the NFFT (but not of the NUFFT) is
publicly available and recent research indicates that careful domain-specific
implementations of the NUFFT can be competitive in GPU environments~\cite{sss18},
however, extensive software comparisons in parallel environments are
out of the scope of the present paper and we do not further pursue
this topic here. A worth-mentioning alternative that could be adapted
to the sampling circumstances that we are considering can be found
in~\cite{pos01}.

\begin{figure}
\begin{centering}
\includegraphics[width=0.5\columnwidth]{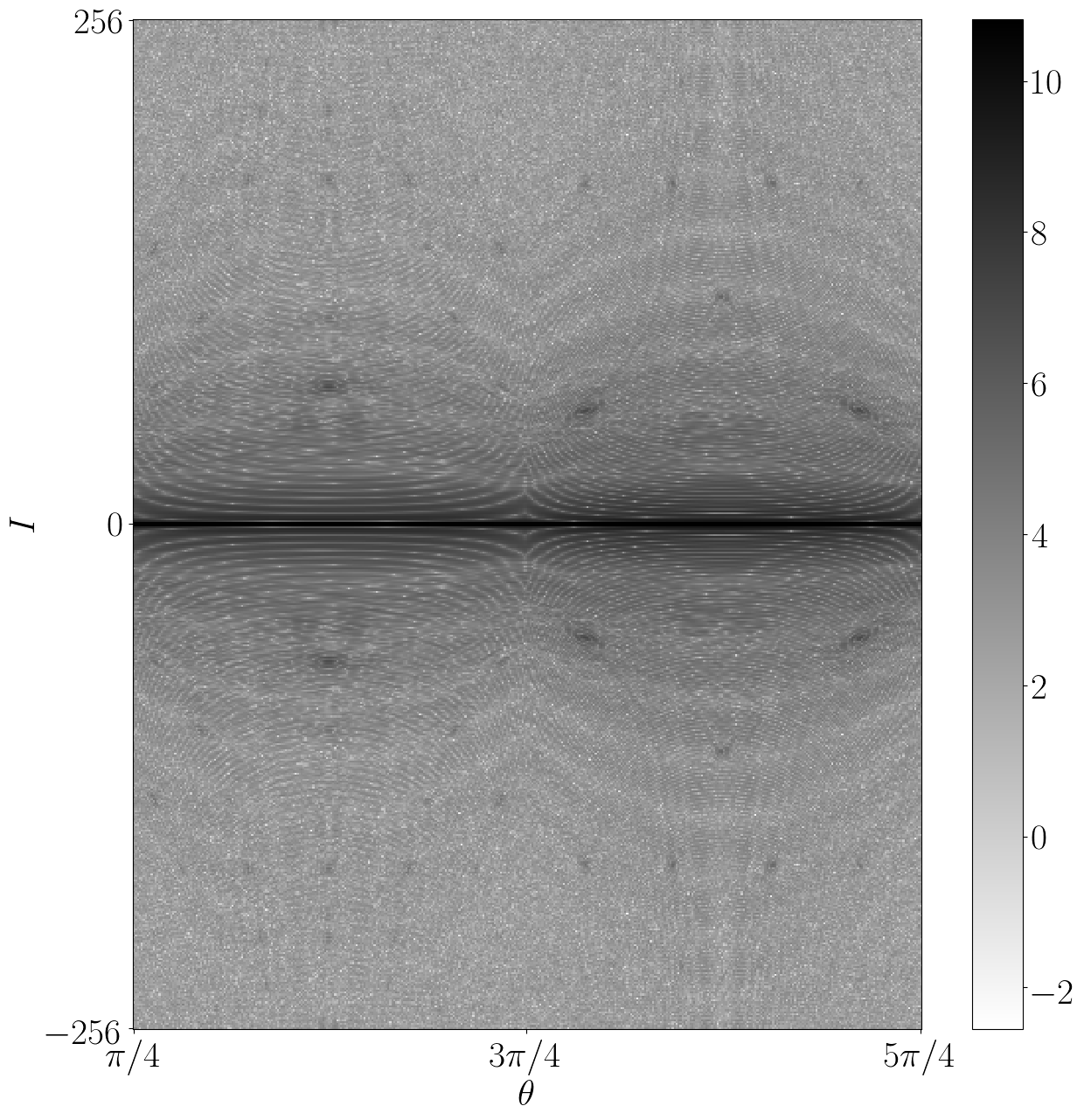}\includegraphics[width=0.5\columnwidth]{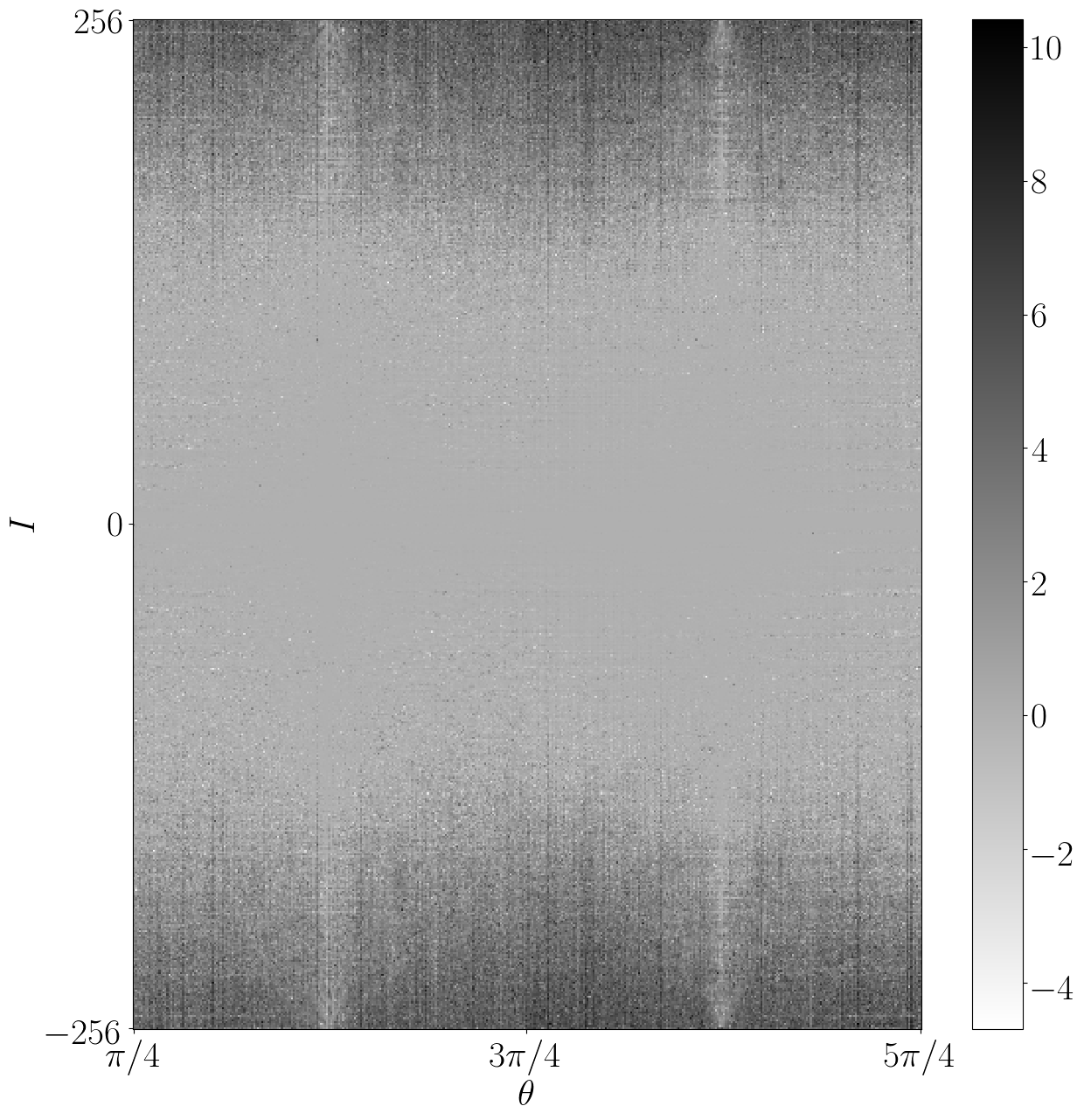}
\par\end{centering}

\caption{Left: Image of $\log\left|y_{I,J}\right|$ where $\boldsymbol{y}=\mathcal{D}_{\mathfrak{N}_{500,400}}\boldsymbol{x}$.
It can be seen that the largest values are highly concentrated around
the region $|I|\approx0$. Right: Image of $\log\frac{\left|\tilde{y}_{I,J}-y_{I,J}\right|}{\left|\overline{y}_{I,J}-y_{I,J}\right|}$,
where $\tilde{\boldsymbol{y}}$ is the approximation obtained by GALE
with $(P,S)=(768,6)$ and $\overline{\boldsymbol{y}}$ is the approximation
obtained by GALE with $(P,S)=(1280,6)$. Note that the accuracy improvements
obtained by a larger Fourier oversampling occur mainly in the region
$|I|\gg0$. In both images, each column represents one ray of the
GALFD.\label{fig:images-energy-and-ratio}}
\end{figure}

An interesting observation regarding these experiments is that if
absolute, rather than relative, errors are considered, then increasing
the Fourier oversampling parameter $P$ has little effect on the approximation
error for GALE. The explanation is given by the combination of two
reasons: \textbf{(i)} the nature of the approximation error bound~\eqref{eq:error-bound},
which, as depicted in Figure~\ref{fig:error-bound}, is such that,
for GALE, the Fourier oversampling is only important in the higher
frequencies components of the Fourier domain and \textbf{(ii)} the
fact that medical images in general, and our phantom in particular,
have most of their energy concentrated in the low frequency parts
of the Fourier domain. Figure~\ref{fig:images-energy-and-ratio}
illustrates these two points. On the left, we can see that indeed
the largest values of the DTFT of the phantom are very concentrated
in the region $|I|\approx0$. On the right, we observe that increasing
the Fourier sampling parameter does provide significantly better accuracy,
but this improvement happens mostly in the region $|I|\gg0$.

Realizing that the NUFFT does not achieve the same level of accuracy
as the other two techniques, we have devised an experiment to further
investigate this phenomenon. We have considered the function
\[
f(\boldsymbol{x})=\left\Vert \mathrm{D}_{\mathfrak{N}_{512,400}}\boldsymbol{x}-\boldsymbol{y}\right\Vert ^{2},
\]
where, again, $\boldsymbol{y}$ is as given in~\eqref{eq:y_def}
and is computed directly from the definition~\eqref{eq:DSFT_def}.
We have then run $20$ iterations of the Conjugate Gradient (CG) method
for the minimization of this function. We have performed two different
reconstructions, one using GALE and the other using the NUFFT with
parameters $S=2$ for both methods and $P=768$ for the NUFFT and
$P=520$ for the GALE to approximate $\mathrm{D}_{\mathfrak{N}_{512,400}}$
and its adjoint. The parameter values were such that the computation
time was approximately the same for both algorithms.

\begin{figure}
\includegraphics[width=0.498\columnwidth]{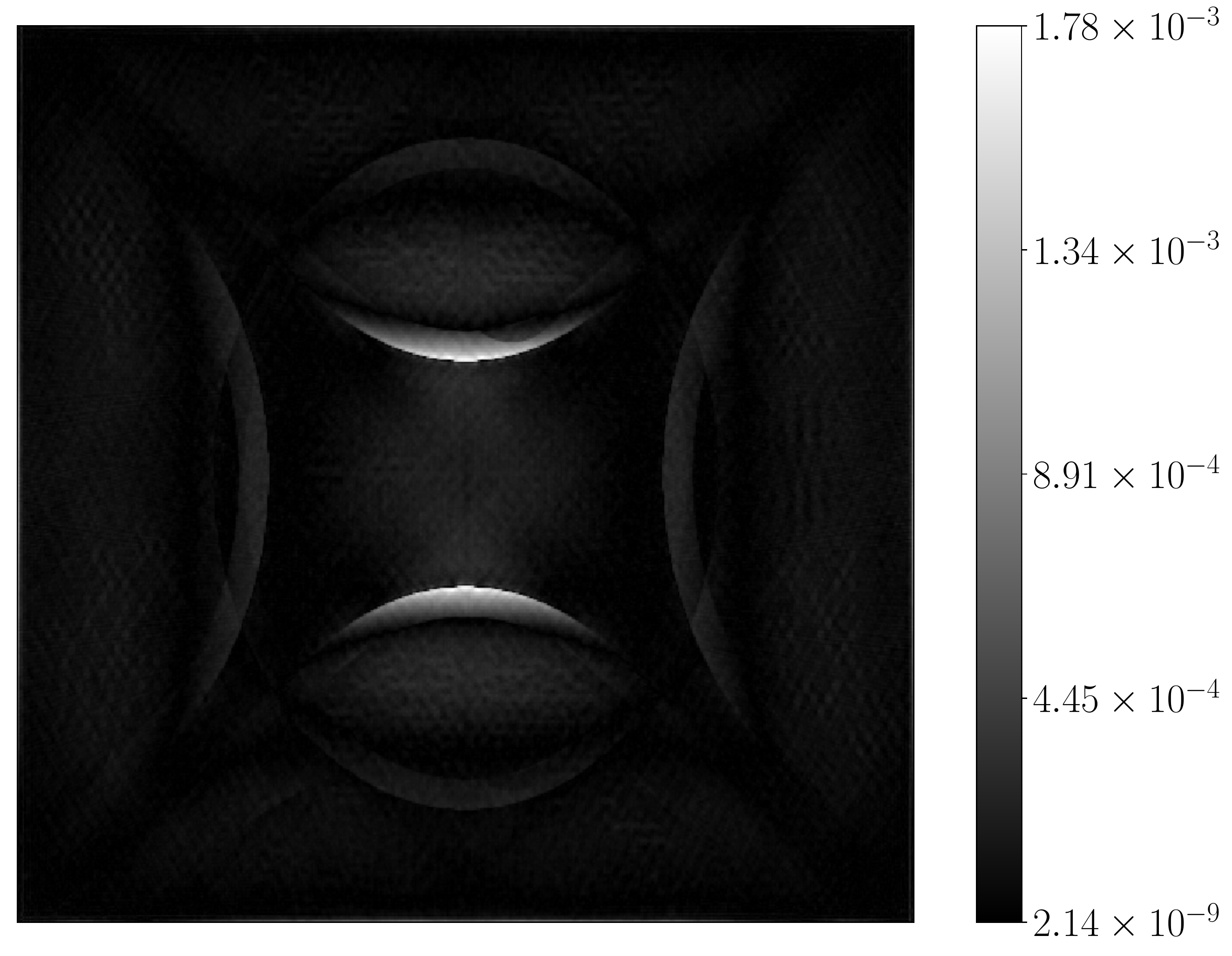}\includegraphics[width=0.502\columnwidth]{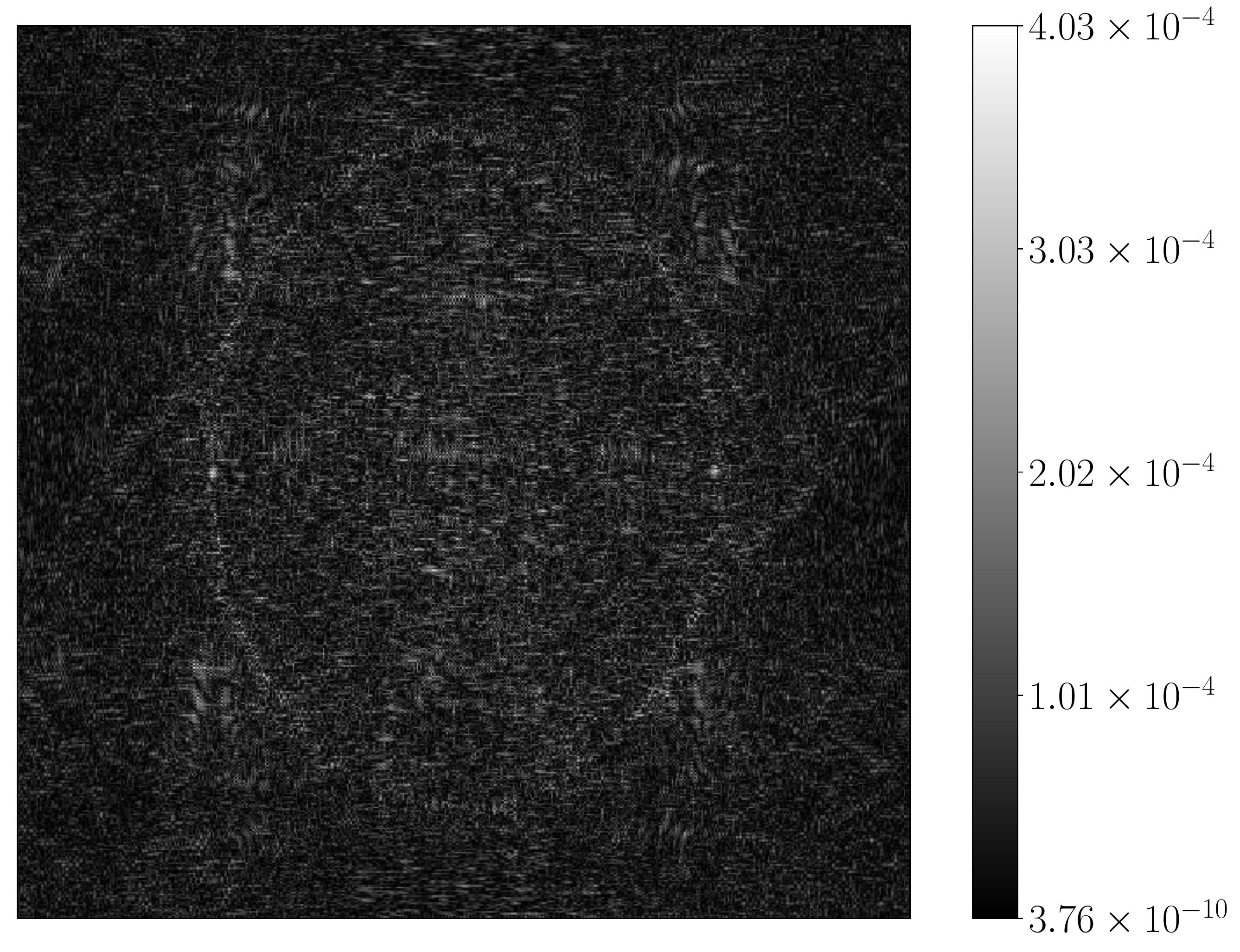}\caption{Images of the errors $\left|x_{i,j}^{\dagger}-x_{i,j}^{20}\right|$,
where $\boldsymbol{x}^{20}$ is iteration $20$ of the CG method started
with $\boldsymbol{x}^{0}=\boldsymbol{x}^{\dagger}$ as detailed in
Subsection~\ref{sub:simulated}. Left: Errors obtained using the
NUFFT with $S=2$ and $P=768$ for approximating $\mathrm{D}_{\mathfrak{N}_{512,400}}$
and its adjoint. The maximum error is $1.8\cdot10^{-3}$. Right: Errors
obtained using the GALE with $S=2$ and $P=520$ for approximating
$\mathrm{D}_{\mathfrak{N}_{512,400}}$and its adjoint. The maximum
error is $4.0\cdot10^{-4}$.\label{fig:error_images}}
\end{figure}

Differences between the reconstructed images and the image $\boldsymbol{x}^{\dagger}$
that originated the data are possibly caused by the following:
\begin{enumerate}
\item Finite termination of the iterative method;
\item Infinite number of solutions for the problem due to number of equations
smaller than the number of variables;
\item Approximation errors of the methods.
\end{enumerate}
Notice that although the system matrix in this case might not be positive
definite, only the last item will be different between experiments
using GALE and experiments using the NUFFT, so, in order to be able
to assess the influence of the last item alone, the CG algorithm was
started with the actual desired image $\boldsymbol{x}^{\dagger}$
and therefore the method should not move from the starting point if
computations were performed with full numerical accuracy. The resulting
images of the differences between reconstructed and original images
can be seen in Figure~\ref{fig:error_images} and show an interesting
phenomenon: the image obtained by the NUFFT presents a very structured
error, whereas the image obtained using the GALE has a random-like
error. In unreported experiments, we have increased the Fourier oversampling
and the truncation parameter for the NUFFT, and we were able to rule
out circular interference as the sole cause of the observed structure.
Although the magnitude of the error can indeed be reduced this way,
it always presents some sort of structure that follows the reconstructed
image.

\subsection{Magnetic Resonance Image Reconstruction}

This subsection is dedicated to experiments involving image reconstruction
in magnetic resonance. In MRI, the data acquisition process provides
a signal that corresponds to the CFT of the object, modulated by the
receiver sensitivities. More precisely, the image is assumed to be
an absolutely integrable complex-valued function in the plane $f:\mathbb{R}^{2}\to\mathbb{C}$
and the data are assumed to take the form
\begin{eqnarray*}
\fl y_{I,J}^{c}\approx\widehat{\zeta_{c}f}(\xi_{I,J},\upsilon_{I,J} & )=\int_{\mathbb{R}^{2}}\zeta_{c}(\boldsymbol{z})f(\boldsymbol{z})e^{-\imath\left(z_{1}\xi_{I,J}+z_{2}\upsilon_{I,J}\right)}\mathrm{d}\boldsymbol{z},\\
 & (I,J,c)\in\{0,1,\dots,M-1\}\times\{0,1,\dots,N-1\}\times\{0,1,\dots,C-1\},
\end{eqnarray*}
where each bounded function $\zeta_{c}:\mathbb{R}^{2}\to\mathbb{C}$
is the sensitivity function of receive coil $c$. In our case, we
acquired data with 
\[
\left\{ \left(\xi_{I,J},\upsilon_{I,J}\right):(I,J)\in\{0,1,\dots,M-1\}\times\{0,1,\dots,N-1\}\right\} =\mathfrak{N}_{M,N}.
\]
Moreover, we can assume that
\[
f=\sqrt{\sum_{c=0}^{C-1}\left(\zeta_{c}f\right)^{2}}.
\]

The radial nature of the GALFD allows us to take an interesting analytical
approach to image reconstruction from this kind of dataset, using
a tool from CT~\cite{her09,nat86,naw01}, which is the Filtered Backprojection
(FBP) algorithm. Let us assume that $f\in\mathscr{S}(\mathbb{R}^{2})$,
where $\mathscr{S}(\mathbb{R}^{2})$ is the Schwartz space~\cite{nat86},
then
\[
f=\mathcal{B}\circ\mathcal{W}\circ\mathcal{R}[f],
\]
where $\mathcal{R}$ is the Radon Transform (RT)
\[
\mathcal{R}[f](\theta,t):=\int_{\mathbb{R}}f\left(s\left(-\sin\theta,\cos\theta\right)+t\left(\cos\theta,\sin\theta\right)\right)\mathrm{d}s,
\]
$\mathcal{B}$ is the Backprojection operator
\[
\mathcal{B}[g](\boldsymbol{z}):=\int_{[0,\pi]}g(\theta,z_{1}\cos\theta+z_{2}\sin\theta)\mathrm{d}\theta,
\]
which turns out to be the adjoint of the Radon Transform $\mathcal{B}=\mathcal{R}^{*}$,
and $\mathcal{W}$ has the form
\[
\mathcal{W}=\mathcal{E}^{-1}\circ\mathcal{C}\circ\mathcal{E},
\]
where $\mathcal{E}$ is an operator that computes the one-dimensional
Fourier Transform with relation to the second variable of a function
of two variables
\[
\mathcal{E}[g](\theta,\omega):=\int_{\mathbb{R}}g(\theta,t)e^{-\imath\omega t}\mathrm{d}t,
\]
which in turn means that $\mathcal{E}^{-1}=(2\pi)^{-1}\mathcal{E}^{*}$,
and $\mathcal{C}$ is a diagonal operator of the form 
\[
\mathcal{C}[g](\theta,\omega):=|\omega|g(\theta,\omega).
\]
Therefore, denoting $\mathcal{G}:=\mathcal{E}\circ\mathcal{R}$ we
have
\begin{equation}
f=\frac{1}{2\pi}\mathcal{G}^{*}\circ\mathcal{C}\circ\mathcal{G}[f].\label{eq:inversion-continuous}
\end{equation}
(For other interesting and relevant inversion techniques see~\cite{grd11,gri13,gri16}.)
Notice that operator $\mathcal{C}$ can be described as pointwise
multiplication by a Density Compensation Function (DCF) $h(\omega,\theta)=|\omega|$
and that, because of the radial nature of the GALFD, we can write
$(\xi_{I,j},\upsilon_{I,J})=\left(\omega_{I}\cos\theta_{J},\omega_{I}\sin\theta_{J}\right)$,
where $\left|\omega_{I}\right|=\sqrt{\xi_{I,J}^{2}+\upsilon_{I,J}^{2}}$.
Now, we consider the Fourier Slice Theorem (FST)~\cite{her09,nat86,naw01},
which states that
\[
\mathcal{G}[f](\theta,\omega)=\mathcal{F}[f](\omega\cos\theta,\omega\sin\theta),
\]
and then we approximate the above integral model by a discrete one.

Assume that $\boldsymbol{x}^{c}\in\mathbb{C}^{mn}$ is such that $x_{i,j}^{c}=f_{c}(j-\nicefrac{n}{2},i-\nicefrac{m}{2})$,
where we denoted $f_{c}:=\zeta_{c}f$, then we have, using the FST
and then approximating the integrations by a sum,
\begin{eqnarray*}
y_{I,J}^{c} & {}=\mathcal{G}[f_{c}]\left(\theta_{I},\omega_{J}\right)=\mathcal{F}\left[f_{c}\right]\left(\omega_{I}\cos\theta_{J},\omega_{I}\sin\theta_{J}\right)\\
 & {}\approx\frac{1}{mn}\sum_{i=0}^{m-1}\sum_{j=0}^{n-1}x_{i,j}^{c}e^{-\imath\left((j-\nicefrac{n}{2})\xi_{I,J}+(i-\nicefrac{m}{2})\upsilon_{I,J}\right)}\\
 & {}=\frac{e^{\imath\left((\nicefrac{n}{2})\xi_{I,J}+(\nicefrac{m}{2})\upsilon_{I,J}\right)}}{mn}\sum_{i=0}^{m-1}\sum_{j=0}^{n-1}x_{i,j}^{c}e^{-\imath\left(j\xi_{I,J}+i\upsilon_{I,J}\right)}.
\end{eqnarray*}
That is, 
\[
\boldsymbol{y}^{c}\approx\mathrm{Z}\mathrm{D}_{\mathfrak{N}_{M,N}}\boldsymbol{x}^{c},
\]
where the diagonal linear operator $\mathrm{Z}:\mathbb{C}^{MN}\to\mathbb{C}^{MN}$
is given componentwise by
\[
(\mathrm{Z}\boldsymbol{y})_{I,J}=y_{I,J}\frac{e^{\imath\left((\nicefrac{n}{2})\xi_{I,J}+(\nicefrac{m}{2})\upsilon_{I,J}\right)}}{mn}.
\]
In other words, the composite operator $\mathrm{Z}\mathrm{D}_{\mathfrak{N}_{M,N}}$
serves as a discrete approximation for $\mathcal{G}[f_{c}]$ when
the latter is computed over $(\xi,\theta)\in\mathfrak{N}_{M,N}$.

The validity of the discretization discussed above depends on bandwidth
considerations about the functions $f_{c}$. An in-depth discussion
about this subject can be found in~\cite{nat86}, but for us, the
practical value of this approach is that it allows us to make use
of a discrete version of~\eqref{eq:inversion-continuous} that reads
\[
\boldsymbol{x}^{c}\approx\mathrm{D}_{\mathfrak{N}_{M,N}}^{*}\mathrm{Z^{*}}\mathrm{C}\mathrm{Z}\mathrm{D}_{\mathfrak{N}_{M,N}}\boldsymbol{x}^{c},
\]
where the diagonal linear operator $\mathrm{C}:\mathbb{C}^{MN}\to\mathbb{C}^{MN}$
is given componentwise by
\[
(\mathrm{C}\boldsymbol{y})_{I,J}=y_{I,J}\sqrt{\xi_{I,J}^{2}+\upsilon_{I,J}^{2}}.
\]
Finally, the algorithm for inverting the MR data consists of computing
\[
\boldsymbol{x}^{c}=\mathrm{D}_{\mathfrak{N}_{M,N}}^{*}\mathrm{Z^{*}}\mathrm{C}\boldsymbol{y}^{c},\quad c\in\{0,1,\dots,C-1\}
\]
and then
\[
x_{i,j}=\sqrt{\sum_{c=0}^{C-1}\left(x_{i,j}^{c}\right)^{2}},\quad(i,j)\in\{0,1,\dots,m-1\}\times\{0,1,\dots,n-1\}.
\]

The above algorithm was applied to the MRI reconstruction of a physical
phantom and of a human brain, where image dimensions were $m=n=512$
($256$ pixels with twofold oversampling, i.e., data were acquired
for an ROI twice as large as the one that was known to contain the
object in order to avoid circular interference) and data dimensions
were $M=512$, $N=400$ and $C=20$ for the phantom, and $N=800$
and $C=16$ for the brain (recall that the data were collected in
$M$ samples for each of the $N$ rays and each sample is measured
in parallel by $C$ coils). Two-dimensional Fourier data were obtained
by applying the IFFT along the $z$ direction of MRI data acquired
using a stack-of-stars 3D Gradient-Recalled Echo (GRE) linogram sequence
with golden-angle ordering at $3$ Tesla. We used two different ways
of computing the adjoint operator $\mathrm{D}_{\mathfrak{N}_{M,N}}^{*}$.
Besides GALE, the NUFFT~\cite{fes03} adjoint routine was tested
for comparison. We used $S=3$ and $P=768$ for both the NUFFT and
GALE. GALE was used with $8$ threads of execution. For these experiments,
a MATLAB binding was created for GALE in order to have it running
in the same environment as the NUFFT package, which is implemented
in MATLAB.

\begin{figure}
\begin{centering}
\def\relsize{0.4955}%
{%
   \setlength{\fboxsep}{0pt}%
   \setlength{\fboxrule}{0.1pt}%
   \fbox{\includegraphics[width=\relsize\columnwidth]{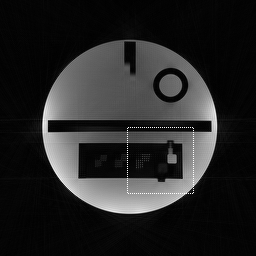}}%
}\hfill%
{%
   \setlength{\fboxsep}{0pt}%
   \setlength{\fboxrule}{0.1pt}%
   \fbox{\includegraphics[width=\relsize\columnwidth]{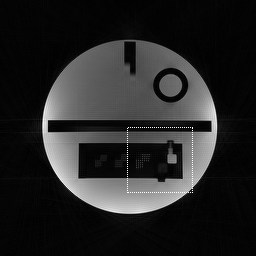}}%
}\\[0.007\columnwidth]%
{%
   \setlength{\fboxsep}{0pt}%
   \setlength{\fboxrule}{0.1pt}%
   \fbox{\includegraphics[width=\relsize\columnwidth]{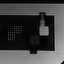}}%
}\hfill%
{%
   \setlength{\fboxsep}{0pt}%
   \setlength{\fboxrule}{0.1pt}%
   \fbox{\includegraphics[width=\relsize\columnwidth]{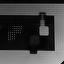}}%
}%
\par\end{centering}

\caption{Reconstruction of physical phantom image from magnetic resonance data.
Left: Using GALE ($0.025$ seconds of computation per coil per slice).
Right: Using NUFFT within ($0.034$ seconds of computation per coil
per slice). Top: Full image reconstructions. Bottom: Enlarged detail
of small reconstructed structures (from regions delimited by dotted
lines in top row).\label{fig:grid_recons}}
\end{figure}

\begin{figure}
\begin{centering}
\def\relsize{0.4955}%
{%
   \setlength{\fboxsep}{0pt}%
   \setlength{\fboxrule}{0.1pt}%
   \fbox{\includegraphics[width=\relsize\columnwidth]{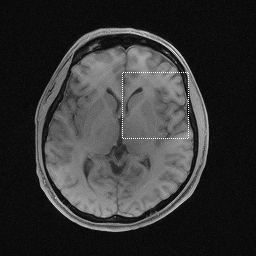}}%
}\hfill%
{%
   \setlength{\fboxsep}{0pt}%
   \setlength{\fboxrule}{0.1pt}%
   \fbox{\includegraphics[width=\relsize\columnwidth]{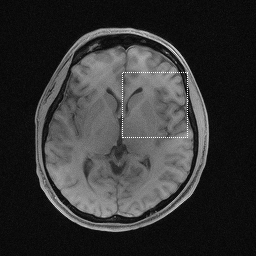}}%
}\\[0.007\columnwidth]%
{%
   \setlength{\fboxsep}{0pt}%
   \setlength{\fboxrule}{0.1pt}%
   \fbox{\includegraphics[width=\relsize\columnwidth]{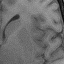}}%
}\hfill%
{%
   \setlength{\fboxsep}{0pt}%
   \setlength{\fboxrule}{0.1pt}%
   \fbox{\includegraphics[width=\relsize\columnwidth]{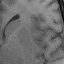}}%
}%
\par\end{centering}

\caption{Reconstruction of human brain image from magnetic resonance data.
Left: Using GALE ($0.028$ seconds of computation per coil per slice).
Right: Using NUFFT ($0.063$ seconds of computation per coil per slice).
Top: Full image reconstructions. Bottom: Enlarged detail of small
reconstructed structures (from regions delimited by dotted lines in
top row).\label{fig:grid_recons_brain}}
\end{figure}

The resulting images from the GALE and from the NUFFT are visually
indistinguishable in the two experiments, as can be seen in Figures~\ref{fig:grid_recons}~and~\ref{fig:grid_recons_brain},
with the existing numerical differences being irrelevant for human
visualization. Furthermore, the good image quality indicates that
the GALFD is a sound way of sampling the Fourier space in MRI and
that the proposed numerical approach is useful in practical scenarios,
potentially leading to a significant reduction in computation time
without loss of accuracy when compared to other available algorithms
for the same task.

\section{\label{sec:Conclusions}Conclusions}

We introduced a new family of subsets, each of which is called a GALFD,
of the two-dimensional Fourier space, that has desirable properties
for MRI reconstruction applications. The paper was dedicated to the
study of GALE, a new approach for the numerical computation of the
DFT over finite domains, in particular over a GALFD. GALE was demonstrated
to be fast and accurate when compared to existing techniques available
for the task. Therefore, among the desirable practical features of
GALE is its good accuracy/computation time trade-off. An upper bound
on the approximation error incurred by GALE was provided; it can be
used to gain insight on how to select the method's parameters in practical
settings.

Generalizing the GALFD sampling scheme to a 3D configuration is planned
for future work, again with application to MRI as the main goal~\cite{hra92}.
In this case, the computation time reductions of a linogram-based
approach is even more significant, because an appropriate selection
of the rays in 3D space allows the method to use a summation of $O(S)$
terms, while some current methods use $O\left(S^{3}\right)$ terms
for this case, which in practice precludes the use of three-dimensional
radial sampling of the Fourier space.

\appendix

\section{The 1D Discrete Fourier Transform\label{sec:FFT}}

The 1D DFT of a vector $\boldsymbol{x}\in\mathbb{C}^{m}$ is given
by:
\[
\hat{x}_{I}:=\sum_{i=0}^{m-1}x_{i}e^{-\imath i\frac{2\pi I}{m}},\quad I\in\{0,1,\dots,m-1\}.
\]
This defines a linear operator $\mathrm{F}_{m}:\mathbb{C}^{m}\rightarrow\mathbb{C}^{m}$
such that $\hat{\boldsymbol{x}}=\mathrm{F}_{m}\boldsymbol{x}$ is
given componentwise as above. It is possible to perform the computation
of $\mathrm{F}_{m}\boldsymbol{x}$ efficiently with an FFT of length
$m$ that uses roughly $5m\log_{2}m$ flops. Inversion of the 1D DFT
is simple: 
\[
x_{i}=\frac{1}{m}\sum_{I=0}^{m-1}\hat{x}_{I}e^{\imath I\frac{2\pi i}{m}},\quad i\in\{0,1,\dots,m-1\}.
\]
Such a computation is performed efficiently by the Inverse FFT (IFFT)
algorithm that uses about as many flops as the FFT does. We denote
this inverse operator by $\mathrm{F}_{m}^{-1}:\mathbb{C}^{m}\rightarrow\mathbb{C}^{m}$.

There are times when we wish to compute, for some $M\ge m$, 
\begin{equation}
\hat{x}_{I}:=\sum_{i=0}^{m-1}x_{i}e^{-\imath i\frac{2\pi I}{M}},\quad I\in\{0,1,\dots,M-1\}.\label{eq:DFT1D}
\end{equation}
Such a computation provides a finer sampling of the Fourier space
and can be obtained by computing $\mathrm{F}_{M}\mathrm{Z}_{m,M}\boldsymbol{x}$,
where $\mathrm{Z}_{m,M}:\mathbb{C}^{m}\rightarrow\mathbb{C}^{M}$
is the zero-padding operator given by
\[
(\mathrm{Z}_{m,M}\boldsymbol{x})_{i}:=\left\{ \begin{array}{ll}
x_{i}, & \quad\text{if}\quad0\le i<m-1,\\
0, & \quad\text{if}\quad m\leq i<M.
\end{array}\right.
\]
It is helpful to use the notation $\text{FFT}(\boldsymbol{x},m,M):=\mathrm{F}_{M}\mathrm{Z}_{m,M}\boldsymbol{x}$,
which also implies that such a computation is done using the FFT algorithm
of length $M$ after zero-padding the vector $\boldsymbol{x}\in\mathbb{R}^{m}$,
where $m\le M$. Correspondingly, for $\boldsymbol{x}\in\mathbb{C}^{M}$
with $M\ge m$, we denote $\text{IFFT}(\boldsymbol{x},m,M):=\mathrm{T}_{m,M}\mathrm{F}_{M}^{-1}\boldsymbol{x}$,
where $\mathrm{T}_{m,M}:\mathbb{C}^{M}\rightarrow\mathbb{C}^{m}$
is the truncation operator defined componentwise as $\left(\mathrm{T}_{m,M}\boldsymbol{x}\right)_{i}=x_{i}$,
$i\in\{0,1,\dots m-1\}$. Note that $\mathrm{T}_{m,M}\mathrm{F}_{M}^{-1}=\nicefrac{1}{M}(\mathrm{F}_{M}\mathrm{Z}_{m,M})^{*}$.

Also useful for us is to notice that the range $I\in\{0,1,\dots,M-1\}$
in~\eqref{eq:DFT1D} can be changed very easily to, say, $I\in\{-R,-R+1,\dots,-R+M-1\}$
by computing
\[
\sum_{i=0}^{m-1}x_{i}e^{-\imath i\frac{2\pi(I-R)}{M}}=\sum_{i=0}^{m-1}x_{i}e^{\imath i\frac{2\pi R}{M}}e^{-\imath i\frac{2\pi I}{M}},\quad I\in\{0,1,\dots,M-1\},
\]
that is multiplying each component $x_{i}$ of $\boldsymbol{x}\in\mathbb{R}^{m}$
by $e^{\imath i\frac{2\pi R}{M}}$ before computing $\text{FFT}(\boldsymbol{x},m,M)$.

\section{The Chirp-Z Transform\label{sub:CZT}}

Let $\alpha\in\mathbb{R}$, $R\in\mathbb{Z}$, and $m$, $M$, $P$
be positive integers such that $P\geq m$. Consider the linear operator
$\mathrm{F}_{m}^{\alpha,M,P,R}:\mathbb{C}^{m}\rightarrow\mathbb{C}^{P}$
such that, for $\boldsymbol{x}\in\mathbb{C}^{m}$, if $\hat{\boldsymbol{x}}^{\alpha,M,P,R}=\mathrm{F}_{m}^{\alpha,M,P,R}\boldsymbol{x}$,
then $\hat{\boldsymbol{x}}^{\alpha,M,P,R}\in\mathbb{C}^{P}$ is given
componentwise by 
\begin{equation}
\fl\hat{x}_{I}^{\alpha,M,P,R}:=\sum_{i=0}^{m-1}x_{i}e^{-\imath i\frac{2\pi(I-R)}{M}\alpha}=\sum_{i=0}^{m-1}x_{i}e^{\imath i\frac{2\pi R}{M}\alpha}e^{-\imath i\frac{2\pi I}{M}\alpha},\quad I\in\{0,1,\dots,P-1\}.\label{eq:CZT-def}
\end{equation}
We call the $\hat{\boldsymbol{x}}^{\alpha,M,P,R}\in\mathbb{C}^{P}$
thus obtained the Chirp-Z Transform of $\boldsymbol{x}\in\mathbb{C}^{m}$.
There is an efficient equivalent way of performing this computation,
for which a key observation is that
\begin{eqnarray}
\hat{x}_{I}^{\alpha,M,P,R} & {}=\sum_{i=0}^{m-1}x_{i}e^{\imath i\frac{2\pi R}{M}\alpha}e^{-\imath\frac{\pi\alpha}{M}\left(I^{2}+i^{2}-(I-i)^{2}\right)}\nonumber \\
 & {}=e^{-\imath\frac{\pi\alpha}{M}I^{2}}\sum_{i=0}^{m-1}x_{i}e^{\imath i\frac{2\pi R}{M}\alpha}e^{-\imath\frac{\pi\alpha}{M}i^{2}}e^{\imath\frac{\pi\alpha}{M}(I-i)^{2}}.\label{eq:CZT-CC}
\end{eqnarray}
We will return to this computation shortly; meanwhile we consider
another useful concept.

A circular convolution $\boldsymbol{p}\circledast\boldsymbol{q}\in\mathbb{C}^{Q}$
between two vectors $\boldsymbol{p}\in\mathbb{C}^{Q}$ and $\boldsymbol{q}\in\mathbb{C}^{Q}$
is defined by
\begin{equation}
(\boldsymbol{p}\circledast\boldsymbol{q})_{I}:=\sum_{i=0}^{Q-1}p_{i}q_{(I-i)\%Q},\quad I\in\{0,1,\dots,Q-1\},\label{eq:CC}
\end{equation}
 The easily proven Circular Convolution Theorem (CCT) states that
\begin{equation}
\boldsymbol{p}\circledast\boldsymbol{q}=F_{Q}^{-1}\left((F_{Q}\boldsymbol{p})\odot(F_{Q}\boldsymbol{q})\right),\label{eq:CCT}
\end{equation}
where $\boldsymbol{u}\odot\boldsymbol{v}\in\mathbb{C}^{Q}$ is obtained
by componentwise multiplication of $\boldsymbol{u}\in\mathbb{C}^{Q}$
and $\boldsymbol{v}\in\mathbb{C}^{Q}$.

The CCT provides efficient computation of a circular convolution using
the FFT and the IFFT. We notice that the summation on the right-hand
side of~\eqref{eq:CZT-CC} can obtained as part of a computation
of the form~\eqref{eq:CC}. To be precise, let $Q=2P$ and $\boldsymbol{p}\in\mathbb{C}^{Q}$,
$\boldsymbol{q}\in\mathbb{C}^{Q}$ be given by
\begin{equation}
p_{i}=\left\{ \begin{array}{ll}
x_{i}e^{-\imath i\frac{\pi\alpha}{M}(i-2R)}, & \quad\text{if}\quad0\le i<m,\\
0, & \quad\text{if}\quad m\le i<Q,
\end{array}\right.\label{eq:CZT-p}
\end{equation}
\begin{equation}
q_{i}=\left\{ \begin{array}{ll}
e^{\imath\frac{\pi\alpha}{M}i^{2}}, & \quad\text{if}\quad0\le i<P,\\
e^{\imath\frac{\pi\alpha}{M}(Q-i)^{2}}, & \quad\text{if}\quad P\le i<Q.
\end{array}\right.\label{eq:CZT-q}
\end{equation}
Then, we have
\begin{equation}
\hat{x}_{I}^{\alpha,M,P,R}=e^{-\imath\frac{\pi\alpha}{M}I^{2}}(\boldsymbol{p}\circledast\boldsymbol{q})_{I},\quad I\in\{0,1,\dots,P-1\}.\label{eq:CZT-equiv}
\end{equation}

Now we give an efficient algorithm for the computation of the CZT.
Since CZTs with fixed parameters $m$, $\alpha$, $M$, $P$, and
$R$ are often computed for several inputs, it is convenient to reuse
all data-independent computations. This is why we separate initialization,
which is performed by Algorithm~\ref{algo:CZT-init}, from the data-dependent
calculations executed by Algorithm~\ref{algo:CZT}, which makes use
of the parameters initialized by Algorithm~\ref{algo:CZT-init} while
computing the CZT itself.

\begin{algorithm}[H]
\caption{$\text{CZT-I}(m,\alpha,M,P,R)$}

\label{algo:CZT-init}

\normalsize{

\begin{algorithmic}[1]

\STATE{\textbf{for} \textbf{$i\in\{0,1,\dots,m-1\}$} \textbf{do}\label{step:CZT_coeffs_init}}

\STATE{$\quad\quad$$r_{i}\leftarrow e^{-\imath i\frac{\pi\alpha}{M}(i-2R)}$\label{step:CZT_coeffs_end}}

\STATE{\textbf{for} \textbf{$i\in\{0,1,\dots,2P-1\}$} \textbf{do}\label{step:CZT_initqhat_start}}

\STATE{$\quad\quad$$q_{i}\leftarrow\left\{ \begin{array}{ll}
e^{\imath\frac{\pi\alpha}{M}i^{2}}, & \quad\text{if}\quad0\le i<P,\\
e^{\imath\frac{\pi\alpha}{M}(2P-i)^{2}}, & \quad\text{if}\quad P\le i<2P.
\end{array}\right.$}

\STATE{$\hat{\boldsymbol{q}}\leftarrow\text{FFT}(\boldsymbol{q},2P,2P)$\label{step:CZT_initqhat_finish}}

\STATE{\textbf{for} \textbf{$I\in\{0,1,\dots,P-1\}$} \textbf{do}}

\STATE{$\quad\quad$$s_{I}\leftarrow e^{-\imath\frac{\pi\alpha}{M}I^{2}}$}

\STATE{$\textbf{return}$ $(\hat{\boldsymbol{q}},\boldsymbol{r},\boldsymbol{s})$
}

\end{algorithmic}}
\end{algorithm}

\begin{algorithm}[H]
\caption{$\text{CZT}(\boldsymbol{x},m,P,\hat{\boldsymbol{q}},\boldsymbol{r},\boldsymbol{s})$}

\label{algo:CZT}

\normalsize{

\begin{algorithmic}[1]

\STATE{\textbf{for} \textbf{$i\in\{0,1,\dots,m-1\}$} \textbf{do}\label{step:CZT_firstfor}}

\STATE{$\quad\quad$$p_{i}\leftarrow x_{i}r_{i}$\label{step:CZT_firstfor_end}}

\STATE{$\hat{\boldsymbol{p}}\leftarrow\text{FFT}(\boldsymbol{p},m,2P)$\label{step:CZT-dftp_p}}

\STATE{\textbf{for} \textbf{$I\in\{0,1,\dots,2P-1\}$} \textbf{do}\label{step:CZT-conv_start}}

\STATE{$\quad\quad$$\hat{p_{I}}\leftarrow\hat{p_{I}}\hat{q}_{I}$\label{step:CZT-second-for-finish}}

\STATE{$\boldsymbol{y}\leftarrow IFFT(\hat{\boldsymbol{p}},P,2P)$\label{step:CZT-conv_finish}}

\STATE{\textbf{for} \textbf{$I\in\{0,1,\dots,P-1\}$} \textbf{do}\label{step:CZT-final-start}}

\STATE{$\quad\quad$$y_{I}\leftarrow y_{I}s_{I}$\label{step:CZT-final-finish}}

\STATE{$\textbf{return}$ $(\boldsymbol{y})$}

\end{algorithmic}}
\end{algorithm}

Finally, let us relate Algorithm~\ref{algo:CZT} with the desired
computation~\eqref{eq:CZT-def}. We assume that we have computed
$(\hat{\boldsymbol{q}},\boldsymbol{r},\boldsymbol{s})\leftarrow\text{CZT-I}(m,\alpha,M,P,R)$
and then analyze the output of $(\boldsymbol{y})\leftarrow\text{CZT}(\boldsymbol{x},m,P,\hat{\boldsymbol{q}},\boldsymbol{r},\boldsymbol{s})$.
First we notice that after Steps~\ref{step:CZT_firstfor}-\ref{step:CZT_firstfor_end}
of Algorithm~\ref{algo:CZT}, because $\boldsymbol{r}$ was computed
in Steps~\ref{step:CZT_coeffs_init}-\ref{step:CZT_coeffs_end} of
Algorithm~\ref{algo:CZT-init}, we have $p_{i}=x_{i}e^{-\imath i\frac{\pi\alpha}{M}(i-2R)}$
for $i\in\{0,1,\dots,m-1\}$. Then, after Step~\ref{step:CZT-dftp_p}
of Algorithm~\ref{algo:CZT}, $\hat{\boldsymbol{p}}$ will hold exactly
the 1D DFT of the sequence defined in~\eqref{eq:CZT-p}. Because
$\hat{\boldsymbol{q}}$ was computed in Steps~\ref{step:CZT_initqhat_start}-\ref{step:CZT_initqhat_finish}
of Algorithm~\ref{algo:CZT-init}, we know that it holds the 1D DFT
of the sequence defined in~\eqref{eq:CZT-q}. Thus, due to the CCT~\eqref{eq:CCT},
we know that Steps~\ref{step:CZT-conv_start}-\ref{step:CZT-conv_finish}
of Algorithm~\ref{algo:CZT} give $y_{I}=(\boldsymbol{p}\circledast\boldsymbol{q})_{I}$
for $I\in\{0,1,\dots,P-1\}$. Steps~\ref{step:CZT-final-start}-\ref{step:CZT-final-finish}
of Algorithm~\ref{algo:CZT} complete the computation of~\eqref{eq:CZT-def}
through the equivalent formulation~\eqref{eq:CZT-equiv} with the
use of~\eqref{eq:CCT}.

\section{Relevance to Practical MRI in Medicine\label{sub:Motivate}}

Our desire to pursue the concept of GALFD for practical imaging applications
is driven by the recent advances in the use of radial imaging approaches
for MRI, with various sorts of polar-type sampling of the Fourier
space. Radial imaging has shown advantages for cardiac \cite{Feng2015}
and abdominal \cite{fgb14} imaging, as well as short-T2 \cite{Rahmer2006}
and sodium imaging \cite{Nielles-Vallespin2007}. While polar sampling
with projection-reconstruction was one of the original methods used
for MRI, it had been mostly abandoned for many years in favor of the
mathematically simpler and often more robust imaging methods that
use direct Cartesian grid sampling of the Fourier space; see Figure
\ref{fig:domains}. Linogram (\textquotedblleft concentric squares\textquotedblright )
data sampling approaches to radial imaging (as in Figure \ref{fig:GALFD},
left) were demonstrated early on to have computational advantages
over the more common \textquotedblleft concentric circles\textquotedblright{}
polar sampling schemes (as in Figure \ref{fig:domains}, right). This
is due to linogram sampling enabling the use of direct Fourier reconstruction
methods without the need for the time-consuming and resolution-degrading
intermediate interpolation steps used with reconstruction from samples
on concentric circles. However, this advantage was not compelling
enough to lead to their widespread adoption at that time, for the
relatively limited size data sets that were being acquired then. The
principal advantage of radial data sampling approaches for magnetic
resonance imaging is that it permits the use of very short delays
before acquiring the data, which is important for imaging of short-T2
species of signal sources \cite{Rahmer2006}, such as hydrogen in
large molecules (such as collagen or myelin) or sodium \cite{Nielles-Vallespin2007}.
In this case, the computational advantages of linogram imaging approaches
over conventional polar sampling methods become particularly important
for acquisition and reconstruction of fully 3D radial data sets. The
limitations of conventional polar radial image reconstruction methods
has led to the common use of \textquotedblleft stack-of-stars\textquotedblright{}
data acquisition approaches (\textquotedblleft concentric cylinders\textquotedblright )
for 3D radial imaging, with their longer associated data acquisition
delay times, rather than fully 3D radial sampling; using linogram
3D approaches (\textquotedblleft concentric cubes\textquotedblright{}
\cite{hra92}) thus has the potential for significantly improved (shorter)
delays in the data acquisition times, as compared to the conventional
approach.

Another factor leading to the recently renewed interest in the radial
sampling MRI methods has been their use for imaging of dynamic processes,
such as contrast enhancement dynamics \cite{fgb14}, or of moving
objects, such as the heart. In this case, we typically use golden
angle incrementation of the orientation of the consecutively acquired
radial sample sets, which permits continuous data acquisition with
the ability for flexible rebinning of the data for use in the image
reconstruction, with approximately uniform angular distributions of
the samples in each such set \cite{Feng2015} . However, this still
requires the time-consuming interpolation steps for reconstruction,
with its associated disadvantages for conventional 2D imaging, and
it has again led to the conventional use of the \textquotedblleft stack-of-stars\textquotedblright{}
approach for 3D imaging with golden angle increments, rather than
fully 3D radial imaging acquisitions. The use of modified linogram
approaches with golden angle increments, as demonstrated in the present
work, allows for the use of direct Fourier approaches for at least
one dimension of the image reconstruction, rather than needing to
rely on intermediate interpolations for all dimensions, with the associated
advantages of computational speed and precision, and thus can potentially
enable improved performance for MRI in these kinds of applications.
This is likely to be particularly useful for 3D imaging.

\section{Iterative Algorithms\label{sec:Iterative-Algorithms}}

Several practical circumstances require data acquisition in CT or
MRI to be done in a less than ideal manner leading to increased noise
and/or small number of data samples. Examples include imaging of moving
objects, requirements of radiation exposure reduction in CT, sodium
imaging in MRI, etc.

In such cases, methods that provide the best reconstructions usually
work by successively refining an initial estimate of the desired image,
that is, they are iterative. A reason why such methods are powerful
is their capability of solving optimization models of image reconstruction,
but quite a few efficacious iterative methods do not rely on an optimization
model. Many iterative methods share a common structure that we now
discuss for the problem of recovering an $\boldsymbol{x}\in\mathbb{C}^{mn}$
in \eqref{eq:DSFT_def} from experimentally obtained vector $\boldsymbol{y}$
of the values $\mathcal{D}[\boldsymbol{x}](\xi,\upsilon)$ of the
DTFT of $\boldsymbol{x}$ for all $(\xi,\upsilon)\in\mathfrak{D}$,
with $\mathfrak{D}$ a GALFD.

Such an iterative method starts with some $\boldsymbol{x}^{0}\in\mathbb{C}^{mn}$
and proceeds by successive application of an algorithmic operator
$\mathcal{A}$ to the current iterate $\boldsymbol{x}^{k}$ to obtain
the next one: $\boldsymbol{x}^{k+1}=\mathcal{A}(\boldsymbol{x}^{k})$,
$k=0,1,2,\dots$. It is in the computation of $\mathcal{A}(\boldsymbol{x}^{k})$
that GALE can be efficacious. To demonstrate this, we refer to \cite[Section 12.2]{her09}
that discusses in detail a particular iterative method of CT reconstruction.
For example, Equation (12.25) there can be rewritten in the notation
of our current paper as
\begin{equation}
\boldsymbol{x}^{k+1}=\mathbf{x}^{k}+\lambda^{k}\left(r^{2}\mathcal{O}^{*}\left(\boldsymbol{y}-\mathcal{O}\boldsymbol{x}^{k}\right)+\left(\mu_{X}-\boldsymbol{x}^{k}\right)\right),\label{iteration}
\end{equation}
where $\mathcal{O}$ is the operator calculated by Algorithm 2 (GALE)
and $\mathcal{O}^{*}$ is its adjoint, $\lambda^{k}$ is a real number
(the relaxation parameter in the $k$th iteration), $r$ is a real
number (the signal-to-noise ratio in the data collection) and $\mu_{X}$
is the expected prior value of $\boldsymbol{x}$. If data were acquired
in a GALFD, then GALE could be used for efficient computations of
$\mathcal{O}\boldsymbol{x}^{k}$ and of $\mathcal{O}^{*}\left(\boldsymbol{y}-\mathcal{O}\boldsymbol{x}^{k}\right)$,
which are often the most time consuming operations of each iteration.
This would lead to substantial computational savings that could bring
such iterative reconstruction techniques closer to clinical practice.

\section*{References}

\bibliographystyle{iopart-num}
\bibliography{refs_Elias_2019-08-06}

\end{document}